\newcommand{\cut}[1]{}
\newcommand{\cutdavid}[1]{}
\newcommand{\cA}{\mathcal{A}}
\newcommand{\cC}{\mathcal{C}}
\newcommand{\cL}{\mathcal{L}}
\newcommand{\cN}{\mathcal{N}}
\newcommand{\cP}{\mathcal{P}}
\newcommand{\cS}{\mathcal{S}}
\newcommand{\bbN}{\mathbb{N}}
\newcommand{\bbNp}{\mathbb{N}_{>0}}
\newcommand{\omegals}{$\omega$-regular languages\xspace}
\newcommand{\omegats}{$\omega T$-regular languages\xspace}
\newcommand{\pnl}{{\rm PNL}}     % PNL
\newcommand{\oL}{\overline{L}}
\newcommand{\os}{\overline{s}}
\newcommand{\seq}[1]{\ensuremath{\vec{#1}}\xspace}
\newcommand{\semantics}[1]{{\ensuremath{\llbracket #1 \rrbracket}}\xspace}
\newcommand{\semanticsvarphi}{\semantics{\varphi}}
\newcommand{\Uquantifier}{{\ensuremath{\mathbb U}}\xspace}
\newcommand{\Bquantifier}{{\ensuremath{\mathbb B}}\xspace}
\newcommand{\msoconstant}{\textsf{MSO}}
\newcommand{\mso}{\msoconstant\xspace}
\newcommand{\sonesconstant}{\textsf{S1S}}
\newcommand{\wsonesconstant}{\textit{w}\textsf{S1S}}
\newcommand{\sones}{\sonesconstant\xspace}
\newcommand{\wsones}{\wsonesconstant\xspace}
\newcommand{\sonesU}{\sonesconstant+\textsf{U}\xspace}
\newcommand{\wsonesU}{\wsonesconstant+\textsf{U}\xspace}
\newcommand{\infcheck}[2]{{\ensuremath{\mathit{check}^{\infty}_{#1,#2}}}\xspace}
\newcommand{\infcheckCk}{\infcheck{\cC}{k}}
\newcommand{\infcheckpik}{\infcheck{\pi}{k}}
\newcommand{\cca}{\textrm{CCA}\xspace}
\newcommand{\ccas}{\textrm{CCA}s\xspace}
\newcommand{\cqa}{\textrm{CQA}\xspace}
\newcommand{\cvectorsymbol}[1][v]{\ensuremath{{\bf #1}}}
\newcommand{\cvectorsymbolsup}[2][v]{\ensuremath{\cvectorsymbol[#1]^{#2}}}
\newcommand{\cvectorsymbolsub}[2][v]{\ensuremath{\cvectorsymbol[#1]_{#2}}}
\newcommand{\cvector}[1][v]{{\cvectorsymbol[#1]}\xspace}
\newcommand{\cvectorsup}[2][v]{{\cvectorsymbolsup[#1]{#2}}\xspace}
\newcommand{\cvectorsub}[2][v]{{\cvectorsymbolsub[#1]{#2}}\xspace}
\newcommand{\cvectorv}{\cvector[v]}
\newcommand{\cvectorvsup}[1]{\cvectorsup[v]{#1}}
\newcommand{\cvectorvsub}[1]{\cvectorsub[v]{#1}}
\newcommand{\cvectorinitial}{\cvectorvsub{0}}
\newcommand{\cvectorvprime}{\cvectorvsup{\prime}}
\newcommand{\cvectorvalueconstruction}[2]{{\ensuremath{#1 {[#2]}}}\xspace}
\newcommand{\cvectorinitialvalue}[1]
{\cvectorvalueconstruction{\cvectorinitial}{#1}}
\newcommand{\cvectorvalue}[2][v]{\cvectorvalueconstruction{\cvector[#1]}{#2}}
\newcommand{\cvectorvaluesub}[3][v]
{\cvectorvalueconstruction{\cvectorsub[#1]{#2}}{#3}}
\newcommand{\cvectorvaluevi}{\cvectorvalue[v]{i}}
\newcommand{\cvectorvaluevsub}[2]{\cvectorvaluesub{#1}{#2}}
\newcommand{\cvectorvaluevsubik}{\cvectorvaluesub[v]{i}{k}}
\newcommand{\Prefixes}[1]{{\ensuremath{\mathit{Prefixes}_{#1}}}\xspace}
\newcommand{\PrefixesA}{\Prefixes{\mathcal{A}}}
\newcommand{\indexbegin}{\ensuremath{\mathit{begin}}\xspace}
\newcommand{\indexend}{\ensuremath{\mathit{end}}\xspace}
\newcommand{\languagewitness}[1]{{\ensuremath{\cL_{\mathsf{w}}(#1)}}\xspace}
\newcommand{\languagewitnessA}{\languagewitness{\cA}}
\newcommand{\languagenfa}[1]{{\ensuremath{\cL(#1)}}\xspace}
\newcommand{\languagenfaN}{\languagenfa{\cN}}
\newcommand{\Si}{\Sigma}
\newcommand{\De}{\Delta}
\newcommand{\de}{\delta}
\newcommand{\isregexp}[1]
{{\ensuremath{\mathit{is\_reg\_exp_{#1}}}}\xspace}
\newcommand{\isbeginningofexp}[1]
{{\ensuremath{\mathit{Beginning\_of_{#1}}}}\xspace}
\newcommand{\Tcondition}[1]{{\ensuremath{\Phi_{#1}^{T\mathit{cond}}}}\xspace}
\newcommand{\block}[1]{{\ensuremath{\Phi_{#1\text{-}\mathit{block}}}}\xspace}
\newcommand{\blockset}[1]{{\ensuremath{\Phi_{#1\text{-}\mathit{block}\text{-}\mathit{set}}}}\xspace}
\newcommand{\ismaxblock}[2]{{\ensuremath{is\_maximal\_#1block(#2)}}\xspace}
\newcommand{\ismaxblockE}[1]{\ismaxblock{E}{#1}}
\newcommand{\ismaxblockEX}{\ismaxblockE{X}}
\newcommand{\noneps}{\ensuremath{\mathit{non}\text{-}\epsilon}}
\newtheorem{proposition}{Proposition}
\newtheorem{definition}{Definition}
\newtheorem{lemma}{Lemma}
\newtheorem{theorem}{Theorem}
\title{Beyond {$\omega{BS}$}-regular Languages: 
  {$\omega{T}$}-regular \\ Expressions and Counter-Check Automata\footnote{This work was partially supported by the Italian INdAM-GNCS project {\em Logics and Automata for Interval Model Checking}. In addition, D. Della Monica acknowledges the financial support from a Marie Curie INdAM-COFUND-2012 Outgoing Fellowship.}}
\author{Dario Della Monica
\institute{Universidad Complutense de Madrid, Spain, and}
\institute{Universit\`a ``Federico II'' di Napoli, Italy.}
\email{{ddellamo@ucm.es}}
\and
Angelo Montanari
\institute{Universit\`a di Udine, Italy.}
\email{{angelo.montanari@uniud.it}}
\and
Pietro Sala
\institute{Universit\`a di Verona, Italy.}
\email{{pietro.sala@univr.it}}
}
\begin{document}
\maketitle

\begin{abstract}
  In the last years, various extensions of $\omega$-regular languages have been proposed in the literature, 
  including $\omega{B}$-regular ($\omega$-regular languages extended with boundedness), $\omega{S}$-regular 
  ($\omega$-regular languages extended with strict unboundedness), and $\omega{BS}$-regular languages (the 
  combination of $\omega{B}$- and $\omega{S}$-regular ones). 
  While the first two classes satisfy a generalized closure property, namely, the complement 
  of an $\omega{B}$-regular (resp., $\omega{S}$-regular) language is an $\omega{S}$-regular 
  (resp., $\omega{B}$-regular) one, the last class is not closed under complementation.
  The existence of non-$\omega{BS}$-regular languages that are the complements
  of some $\omega{BS}$-regular ones and express fairly natural properties of
  reactive systems motivates the search for other well-behaved classes of extended 
  $\omega$-regular languages.
  In this paper, we introduce the class of $\omega{T}$-regular languages, that
  includes meaningful languages which are not $\omega{BS}$-regular.
  We first define it in terms of $\omega{T}$-regular expressions. Then, we introduce a new class 
  of automata (counter-check automata) and we prove that (i) their emptiness 
  problem is decidable in PTIME and (ii) they are expressive enough to capture $\omega{T}$-regular languages
  (whether or not $\omega{T}$-regular languages are expressively complete with
  respect to counter-check automata is still an open problem).
  Finally, we provide an encoding of $\omega{T}$-regular expressions into \sonesU.
\end{abstract}

\section{Introduction}\label{sec:intro}
A fundamental role
in  computer science is played by $\omega$-regular languages, as they provide a natural setting for the specification and verification of nonterminating finite-state systems. Since the seminal work by B\"uchi~\cite{Buchi62}, McNaughton~\cite{McNaughton66}, and Elgot and Rabin \cite{DBLP:journals/jsyml/ElgotR66} in the sixties, a great research effort has been devoted to the theory and the applications of $\omega$-regular languages. Equivalent characterisations of $\omega$-regular languages have been given in terms of formal languages (\emph{$\omega$-regular  expressions}), automata (B\"uchi, Rabin, and Muller automata), classical logic (weak/strong monadic second-order logic of one successor, \wsones/\sones for short), and temporal logic (Quantified Linear Temporal Logic, Extended Temporal Logic).

Recently, it has been shown that $\omega$-regular languages can be extended in various ways, preserving their decidability and some of their closure properties \cite{bojan11tcs,DBLP:conf/lics/BojanczykC06}.
As an example, extended $\omega$-regular languages make it possible to constrain the distance between consecutive occurrences of a given symbol to be (un)bounded (in the limit). Boundedness comes into play in the study of \emph{finitary fairness} as opposed to the classic notion of \emph{fairness}, widely used in automated verification of concurrent systems. According to the latter, no individual process in a multi-process system may be ignored for ever; finitary fairness imposes the stronger constraint that every enabled transition is executed within at most $b$ time-units, where $b$ is an unknown, constant bound. In~\cite{DBLP:journals/toplas/AlurH98}, it is shown that finitary fairness enjoys some desirable mathematical properties that are violated by the weaker notion of fairness, and yet it captures all reasonable schedulers' implementations. The same property has been investigated from a logical perspective in~\cite{DBLP:journals/fmsd/KupfermanPV09}, where the logic PROMPT-LTL is introduced. Roughly speaking, PROMPT-LTL extends LTL with the \emph{prompt-eventually} operator, which states that an event will happen within the next $b$ time-units, $b$ being an unknown, constant bound.
An analogous extension has been recently proposed for the propositional interval logic of temporal neighborhood \pnl~\cite{jelia16}.

From the point of view of formal languages, the proposed extensions pair the Kleene star $(.)^*$ with bounding/unbounding variants of it. Intuitively, the bounding exponent $(.)^B$ (aka $B$-constructor) constrains parts
of the input word to be of bounded size, while the unbounding exponent $(.)^S$ (aka $S$-constructor) forces parts
of the input word to be arbitrarily large. The two extensions have been studied both in isolation ($\omega{B}$- and $\omega{S}$-regular expressions) and in conjunction ($\omega{BS}$-regular expressions). Equivalent characterisations of extended $\omega$-regular languages are given in~\cite{bojan11tcs,DBLP:conf/lics/BojanczykC06} in terms of automata ($\omega{B}$-, $\omega{S}$-, and $\omega{BS}$-automata) and classical logic (fragments of \wsonesU, i.e., the extension of \wsones with the unbounding quantifier \Uquantifier \cite{bojan04csl}, that allows one to express properties which are satisfied by finite sets of arbitrarily large size).\footnote{Undecidability of full \sonesU has been shown in \cite{DBLP:conf/stacs/BojanczykPT16}.}
In~\cite{DBLP:conf/lics/BojanczykC06}, the authors also show that the complement of an $\omega{B}$-regular language is an $\omega{S}$-regular one and vice versa; moreover, they show that $\omega{BS}$-regular languages, featuring both $B$- and $S$-constructors, strictly extend $\omega{B}$- and $\omega{S}$-regular languages and are not closed under complementation. 

%\smallskip

In this paper, we focus on those $\omega$-languages which are complements of $\omega BS$-regular ones, but are not $\omega BS$-regular. 
%Our ultimate goal is to provide a characterisation of the class of these languages, and the present work is a nontrivial step in this direction. 
We start with an in-depth analysis of a paradigmatic example of one such language~\cite{DBLP:conf/lics/BojanczykC06}.
%the complement of an $\omega{BS}$-regular language that lies outside the class of $\omega{BS}$-regular languages~\cite{DBLP:conf/lics/BojanczykC06}.
It allows us to identify a meaningful extension of $\omega$-regular languages ($\omega{T}$-regular languages) including it and obtained by adding a new, fairly natural constructor  $(.)^T$, named $T$-constructor,
to the standard constructors of $\omega$-regular expressions. 
%We conjecture that the class obtained by enriching $\omega{BS}$-regular languages with the $T$-constructor is closed under complementationSuch a belief that the $T$-constructor actually fills the gap left by $\omega{BS}$-regular languages is supported by the fact that--FALSE 
An interesting feature of such a class is that pairing $(.)^B$ and $(.)^S$ with $(.)^T$ one can capture all possible ways of instantiating  $*$-expressions (this is not the case with $B$ and $S$ only). In view of that, it can be said that $(.)^T$ ``complements'' $(.)^B$ and $(.)^S$ with respect to $(.)^*$.
% it covers the missing behaviors of the Kleene star
Then, we introduce a new class of automata (counter-check automata), that are expressive enough to capture $\omega{T}$-regular languages, and we show that their emptiness problem is decidable.
Finally, we provide an encoding of $\omega{T}$-regular expressions (languages) into \sonesU.

The paper is organized as follows. In Section~\ref{sec:languages}, we illustrate existing extensions of $\omega$-regular languages, with a special attention to $\omega{BS}$-regular ones, and we introduce the class 
of $\omega{T}$-regular languages. In Section~\ref{sec:automata}, we define counter-check automata (\cca) and prove that their emptiness problem is decidable in PTIME. In Section~\ref{sec:encoding}, we provide an encoding of  \texorpdfstring{$\omega T$}{omega T}-regular languages into \cca, while, in Section~\ref{sec:logic}, we show that they can be defined in \sonesU. 
%In Section \ref{sec:strongT}, we briefly discuss a stronger variant of $(.)^T$. 
Conclusions provide an assessment of the work done and outline future research directions.

%%% Local Variables: 
%%% TeX-master: "main"
%%% End: 

\section{Extensions of
\texorpdfstring{\omegals}{omega-regular languages}}
\label{sec:languages}

In this section, we 
%first 
give a short account of the extensions of \texorpdfstring{\omegals}{omega-regular languages}
proposed in the literature (details can be found in~\cite{bojan11tcs,DBLP:conf/lics/BojanczykC06})
and 
%then 
we outline a new 
%meaningful 
one.
%
%At a very intuitive level,
To begin with, we observe that an 
%infinite word (
\emph{$\omega$-word}
%)
can be seen as the concatenation of a finite prefix,
belonging to a regular language,
%followed by
and an infinite sequence of finite words (we call each of these finite words an \emph{$\omega$-iteration}),
also belonging to a regular language.
%
%Both the prefix language and the language of the $\omega$-iterations
%%of the infinite suffix
%can be given as regular expressions  ($\omega$-regular expression).
A standard way to define $\omega$-regular languages is by means of $\omega$-regular expressions.
An interesting case is that of $\omega$-iterations consisting of a finite sequence of words,
%(\emph{$*$-iterations}),
generated by an occurrence of the Kleene
star operator $(.)^*$, aka \emph{$*$-constructor}, in the scope of the \emph{$\omega$-constructor}
$(.)^\omega$.
%Since $\omega$-iterations are expressed by regular expressions, each of them
%could in turn feature a finite sequence of words (\emph{$*$-iterations}),
%which are iterated by using the Kleene star operator $(.)^*$, aka
%\emph{$*$-constructor}, in the scope of the \emph{$\omega$-constructor}
%$(.)^\omega$.
As an example, the $\omega$-regular expression $(a^*b)^\omega$ generates the
language of $\omega$-words featuring an infinite sequence of $\omega$-iterations,
each one consisting of a finite (possibly empty) sequence of $a$'s followed by exactly
one $b$.
%
%Informally speaking, given
Given an $\omega$-regular expression $E$ featuring an occurrence of $(.)^*$ (sub-expression $R^*$) in the scope of  $(.)^\omega$
%containing $R^*$ as a sub-expression (i.e., $E$ features an occurrence of the
%$*$-constructor in the scope of the $\omega$-constructor),
and an $\omega$-word $w$ belonging to the language of $E$,
%described by $E$,
we refer to the sequence of the sizes of the (maximal) blocks of consecutive iterations
of $R$ in the different $\omega$-iterations as the \emph{(sequence of) exponents of $R$
in (the $\omega$-iterations of) $w$}. Let $w = abaabaaab \ldots$ be an $\omega$-word 
%$w = abaabaaabaaaab \ldots$,
%captured by the above
generated by the 
%above 
$\omega$-regular expression $(a^*b)^\omega$.
The sequence of exponents of $a$ in $w$ is
%$1, 2, 3, 4, \ldots$. 
$1, 2, 3, \ldots$. Sometimes, we will denote words in a compact way, by
explicitly indicating the exponents of a sub-expression, e.g., we will write
$w$ as $a^1ba^2ba^3b \ldots$.
%$a^1ba^2ba^3ba^4b \ldots$.

Given an expression $E$, we denote by $\mathcal L(E)$ the language defined by $E$. With a little abuse of notation, we will sometimes identify a language with the expression defining it, and vice versa, e.g., 
%so, for instance, 
we will 
%simply 
write ``language $(a^* b)^\omega$'' instead of ``language $\mathcal L((a^* b)^\omega)$''.
Notice that $(.)^*$ 
%operator 
allows one to impose the existence of a finite sequence of words (described by its argument
expression) within each $\omega$-iteration, but it cannot be used to express properties 
%on 
of the sequence of exponents of its argument expression in the $\omega$-iterations of an $\omega$-word.
%
%$\omega$-regular expressions define the class of languages known as
%\omegals. Intuitively, these are languages of infinite words, obtained through
%the infinite concatenation of finite words expressed by regular expressions.
%
To overcome such a limitation, some meaningful extensions of
$\omega$-regular expressions have been investigated in the last years, that make
it possible to constrain the behavior of $(.)^*$  in the limit.
%
% to deal with particular iterations of the Kleene star $(.)^*$
%
%In this section, we provide a short account of the proposed extensions (details
%can be found in \cite{bojan04csl,bojan11tcs,DBLP:conf/lics/BojanczykC06})
%and we outline a new one.

\smallskip

\noindent \textbf{Beyond \texorpdfstring{$\omega$}{omega}-regularity.}
\label{subsec:beyond}
A first class of extended $\omega$-regular languages is that
of $\omega{B}$-regular languages, that allow one to impose boundedness
conditions.~$\omega{B}$-regular expressions are obtained from $\omega$-regular
ones by adding a variant of $(.)^*$, called \emph{$B$-constructor}
and denoted by $(.)^B$, to be used in the scope of $(.)^\omega$.
The bounded exponent $B$ allows one to constrain the argument $R$ of the
expression $R^B$ to be repeated in each $\omega$-iteration a number of times
less than a certain bound fixed for the whole $\omega$-word.
As an example, % \cite{DBLP:conf/lics/BojanczykC06},
the expression $(a^B b)^\omega$ denotes the language of $\omega$-words
in $(a^\ast b)^\omega$ for  which there is an upper bound on the number of
consecutive occurrences of $a$ (the sequence of exponents of $a$ is bounded).
As the bound may vary from word to word, the language is not $\omega$-regular.
The class of  $\omega{S}$-regular languages extends that of $\omega$-regular
ones with strong unboundedness. By analogy with $\omega{B}$-regular
expressions, $\omega{S}$-regular expressions are obtained from $\omega$-regular
ones by adding a variant of $(.)^*$, called \emph{$S$-constructor}
and denoted by $(.)^S$, to be used in the scope of $(.)^\omega$.
For every $\omega{S}$-regular expression containing the sub-expression $R^S$
and every natural number $k > 0$, the strictly
%strongly
%\marginpar{\tiny ``strictly'' instead of ``strongly''? to be uniform
%with~\cite{DBLP:conf/lics/BojanczykC06}???}
unbounded exponent $S$ constrains
the number of $\omega$-iterations in which the argument $R$ is repeated at most $k$ times to be finite.
%
%constrains every exponent of the argument $R$  to appear only a finite number of times.
%
Let us consider $\omega$-words that feature an infinite
number of instantiations of the expression $R^S$, that is, $\omega$-words
%\marginpar{\scriptsize
%We should clarify why this is not always the case. At first glance, it seems
%that this is always the case and thus this restriction does not make so much
%sense}
for which there exists an infinite number of $\omega$-iterations including a sequence
of consecutive $R$'s generated by $R^S$. It can be easily checked that in these
words the sequence of exponents of $R$ tends towards infinity.
%, for every
%infinite chain of (maximal) sequences of consecutive $R$'s generated by $R^S$,
%the infimum of the size of such sequences tends toward infinity.
%
%$\omega$-words which features infinite application of the $S$
%operator must force every maximal chains of exponents of its  argument $R$
%to have an unbounded infimum.
%
As an example, % \cite{DBLP:conf/lics/BojanczykC06},
the expression $(a^S b)^\omega$ denotes the language of $\omega$-words
$w$ in $(a^\ast b)^\omega$ such that, for any 
%natural number 
$k > 0$, there exists
a suffix of $w$ that only features maximal sequences of consecutive $a$'s that
are longer than $k$.

$\omega{BS}$-regular expressions are built by using the operators of
$\omega$-regular expressions and both $(.)^B$ and $(.)^S$.
%Consequently,
In~\cite{DBLP:conf/lics/BojanczykC06},
%Boja\'nczyk and Colcombet
the authors show that the class of $\omega BS$-regular languages
%these are more expressive than their component
%languages (i.e., the $\omega B$-regular and the $\omega S$-regular languages).
strictly includes the classes of $\omega{B}$- and $\omega{S}$-regular languages
as witnessed by the $\omega{BS}$-regular language $L = (a^B b + a^S b )^\omega$
consisting of those $\omega$-words $w$ featuring infinitely many occurrences of
$b$ and such that there are only finitely many numbers occurring infinitely
often in the sequence of exponents of $a$ in $w$, that is, there is a
bound $k$ such that no $h > k$ occurs infinitely often in the sequence of exponents of $a$ in
$w$. $L$ is neither $\omega B$- nor $\omega S$-regular.\footnote{The constructor $+$ occurring in $L$ must not be thought of as performing the union of two languages, but rather as a ``shuffling operator'' that mixes $\omega$-iterations belonging to the two different (sub-)languages.}
%This will be made clear later on, when we will formally define the languages we deal with.}
Moreover, they prove that the class of $\omega{BS}$-regular languages is not
closed under complementation.
A counterexample is given precisely by 
%the $\omega{BS}$-regular language 
$L$, whose complement is
not $\omega BS$-regular (notice that $\omega{BS}$-regular languages whose
complement is not an $\omega{BS}$-regular language are neither $\omega B$-
nor $\omega S$-regular languages, as the complement of an $\omega{B}$-regular
language is an $\omega{S}$-regular one and vice versa).
% in~\cite{DBLP:conf/lics/BojanczykC06},

\label{sec:languageLdefinition}

% $\omega{BS}$-regular
%expressions are obtained from $\omega$-regular ones by adding both $(.)^B$
%and $(.)^S$ exponents to be used in the scope of the $\omega$-constructor
%$(.)^\omega$.

%\smallskip

In this paper, we investigate those $\omega$-languages that do not belong to the class
of $\omega{BS}$-regular languages, but whose complement belongs to this class.
%To have some insights into these languages, 
Let us consider, for instance, the complement $\oL$ of the language $L$ above.
%On the one hand, 
%any  $\omega$-word $w$ in $\oL$ that features an infinite
%number of occurrences of $b$ must feature an infinite sequence of blocks of consecutive $a$'s  (between two consecutive $b$'s) of unbounded size; otherwise,  $w$ would belong to $L$, as it would be captured by the sub-expression $a^B b$. On the other hand, for any such $\omega$-word $w$, there must be a natural number
%$k > 0$ such that there exist infinitely many maximal blocks of consecutive $a$'s whose size is exactly $k$;
%otherwise, $w$ would belong to $L$, as it would be captured by the sub-expression $a^S b$.
%Thus, $w$ is such that
%\begin{inparaenum}[$(i)$]
%\item for every natural number $k$, there exists $k' > k$ that occurs in the
%  sequence of exponents of $a$ in $w$, and
%\item there exists at least one natural number $k > 0$ that occurs infinitely
%  often in the sequence of exponents of $a$ in $w$.
%\end{inparaenum}
%In fact, as an effect of the combined use of both $B$- and $S$-constructors,
%$w$ is subject to an even stronger constraint:
Any word $w$ in $\oL$ that features infinitely many occurrences of $b$ (i.e.,
$w \in (a^*b)^\omega$) is such
that there are infinitely many
natural numbers that occur infinitely often in the sequence of exponents of
$a$ in $w$.
% (notice that this latter constraint implies both the former ones).
%
By way of contradiction, suppose that there are only finitely many.
% natural numbers (exponents) that occur infinitely often. 
Let $k$ be the largest one.
Now, $w$ can be viewed as an infinite sequence of
$\omega$-iterations, each of them characterised by the corresponding exponent
of $a$.
If the exponent associated with an $\omega$-iteration is greater than $k$, then
it does not occur infinitely often, and thus the $\omega$-iteration is captured by
the sub-expression $a^Sb$. Otherwise, if the exponent is not greater than $k$, then the
corresponding $\omega$-iteration is captured by the sub-expression $a^Bb$.
As an example, the $\omega$-word $a^1ba^2ba^1ba^3ba^1ba^4b \ldots$ does not belong to
$\oL$ as 1 is the only exponent occurring infinitely often, while the $\omega$-word $a^1ba^2ba^1ba^2ba^3ba^1ba^2ba^3ba^4b \ldots$ \emph{does} belong to it as 
infinitely many (actually all) natural numbers occur infinitely often in the sequence 
of exponents.
%
%All these subsequences cannot feature an exponent for the $a$'s a  finite number
%of times otherwise they will be  all captured by the language $a^S b$. Then there exists a subsequence $s$ which features at least an exponent an infinite number of times together with the fact that such the exponents of such a subsequence are not bounded.
%How many of different exponents we need to repeat infinitely many times in $s$?
%If we take a finite number of them it turns out that we can
%take the maximum $m$ of such infinitely repeated exponents split the subsequence $s$ into two  infinite ones $s'$ and $s''$. $s'$ contains all the exponent less or equal $m$ and $s''$ all the exponents greater or equal than $m$
%and thus $s'$ is captured from $a^B b$ and $s''$ from $a^S b$. Then $s$ must feature an infinite number of exponents which are repeated infinitely many times.

%\smallskip

Here, we focus on $\omega$-words featuring infinitely many exponents occurring infinitely
often.
%In order to characterise the class of languages of $\omega$-words
%with infinitely many exponents occurring infinitely often,
More precisely, we introduce a new variant of $(.)^*$, called \emph{$T$-constructor} and denoted by $(.)^T$,
to be used in the scope of $(.)^\omega$, and we define the corresponding class of extended $\omega$-regular languages (\omegats).
%$\omega{T}$-regular expressions are obtained from $\omega$-regular ones by introducing a variant of Kleene star $(.)^*$, called \emph{$T$-constructor} and denoted by $(.)^T$, to be used in the scope of the $\omega$-constructor $(.)^\omega$.
Let $E$ be an $\omega$-expression and let $w \in E$. An expression $R^T$ occurring in $E$ forces
%two conditions on the $\omega$-words $w$ belonging to $E$:
%\begin{inparaenum}[$(i)$]
%\item the sequence of exponents of $R$ in $w$ features an infinite number of
%  distinct exponents, and
%%the set of exponents occurring in the sequence is infinite.
%\item every exponent occurring in the sequence occurs infinitely often.
%\end{inparaenum}
\label{sec:intuitivedefinitionofTconstructor}
the sequence of exponents in $w$ to feature infinitely many different elements occurring infinitely often.
As an example, it can be easily checked that the language $\overline L$ can be defined as
%$((a^*b)^*a^T b)^\omega + (a^*b^*)^* a^\omega$,
$(a^T b)^\omega + (a^*b^*)^* a^\omega$,
and thus it belongs to the class of  \omegats.
%
%The  exponent $T$ allows one to constrain the argument $R$ of the expression $R^T$ to be repeated an unbounded number of times in any $\omega$-iteration and there is no bounds on the number of such exponents, in other words once we have expressed an exponent $k$ in an $\omega$-teration the $k$ iteration of $R$ is constrained to be repeated infinitely many times in the rest of the word plus the fact that there exist infinitely many of such exponents. It is easy to see that the language $\cL$
%of the previous example is captured by $((a^*b)^*a^T b)^\omega$ (DA SOSTITUIRE CON $(a^T b)^\omega$)
%which guarantees that there is at most one infinite subsequence of exponents in the word which features an infinite number of exponents and each exponent is
%repeated infinitely many times.
%
%There exists an infinite number of natural numbers $k$ such that there
%are infinitely many  $\omega$-iterations in which the argument $R$ of the
%expression $R^T$ is repeated exactly $k$ times. Notice that there may
%exist other natural numbers (possibly infinitely many) such that there
%exists a finite number of $\omega$-iterations (possibly $0$) in which the
%argument $R$ of the expression $R^T$ is repeated exactly $k$ time.
%
In the following, we first provide a formal account of
$\omega BS$-regular languages~\cite{DBLP:conf/lics/BojanczykC06} and
then we define $\omega T$-regular ones.

%We are ready now to formally define the languages that are the focus of this
%paper. In what follows, we first define $\omega BS$-regular
%expressions~\cite{DBLP:conf/lics/BojanczykC06}, and then we introduce
%the construct $(.)^T$.

\smallskip

\noindent \textbf{\texorpdfstring{$\omega BS$}{omega BS}-regular languages.}
%Intuitively, $\omega BS$-regular expressions are defined as infinite iterations
%of $BS$-regular ones.
The class of \emph{$\omega BS$-regular languages} is the class of languages
defined by $\omega BS$-regular expressions.
These latter are built on top of $BS$-regular expressions, just as
$\omega$-regular expressions are built on top of regular ones.
Let $\Sigma$ be a finite, non-empty alphabet.
A \emph{$BS$-regular expression} over $\Sigma$ is defined by the
grammar~\cite{DBLP:conf/lics/BojanczykC06}:

{\centering $ e \ ::= \ \emptyset \ \mid \ a \ \mid \ e \cdot e \ \mid \ e+e \ \mid \ e^* \ \mid
\ e^B \ \mid \ e^S $

}

\noindent with $a \in \Sigma$. Sometimes we omit the
concatenation operator, thus writing $ee$ for $e \cdot e$.

%Syntactically, 
$BS$-regular expressions differ from standard regular ones for the
presence of the 
%two 
constructors $(.)^B$ and $(.)^S$.
Since these operators constrain the behavior of the sequence of $\omega$-iterations to the
limit, it is not possible to simply define the semantics of $BS$-regular expressions in
terms of languages of (finite) words, and then to obtain $\omega BS$-regular languages through
infinitely many, unrelated iterations of such words.
Instead, we specify their semantics 
%of $BS$-regular expressions 
in terms of languages of infinite sequences of finite words; suitable constraints are imposed to such sequences 
in order to capture the intended meaning of $(.)^B$ and $(.)^S$
% the $B$- and $S$-constructors.

Let $\mathbb N$ be the set of natural numbers, including $0$, and
$\bbNp = \mathbb N \setminus \{ 0 \}$.
For an infinite sequence $\seq{u}$ of finite words over $\Sigma$, we denote by
$u_i$ ($i \in \bbNp$) its $i$-th element.
%Let $f : \mathbb N \rightarrow \mathbb N$
%with $f(0) = 1$.
%
The semantics of $BS$-regular expressions over $\Sigma$ is defined as follows
(hereafter we assume $f(0)=1$):
\begin{compactitem}
\item $\mathcal L(\emptyset) = \emptyset$;

%\item $\mathcal L(a) = \{ (a,a,a, \ldots) \}$, that is, the infinite sequence
%of the one-letter word $a$, for each $a \in \Sigma$;
\item for $a \in \Sigma$, $\mathcal L(a)$ only contains the infinite sequence of the one-letter word $a$
 $\{(a,a,a, \ldots)\}$;

\item $\mathcal L(e_1 \cdot e_2) = \{ \seq{w} \mid \forall i. w_i = u_i \cdot v_i,
  \ \seq{u} \in \mathcal L(e_1), \ \seq{v} \in \mathcal L(e_2) \}$;

\item $\mathcal L(e_1 + e_2) = \{ \seq{w} \mid \forall i.w_i \in
  \{ u_i, v_i \},
  \ \seq{u}, \seq{v} \in \mathcal L(e_1) \cup \mathcal L(e_2) \}$;\footnote{Unlike the case of word languages,
  when applied to languages of word sequences, the operator $+$ does not return the union of the two
  argument languages. As an example, $\mathcal L(a) \cup \mathcal L(b) \subsetneq \mathcal L(a+b)$,
  as witnessed by the word sequence $(a,b,a,b,a,b,\ldots)$.
  In general, for all $BS$-regular expressions $e_1,e_2$, it holds that $\mathcal L(e_1) \cup \mathcal L(e_2) \subseteq \mathcal
  L(e_1+e_2)$.}
%
%Notice that this definition slightly differs from the one
%provided  in~\cite{DBLP:conf/lics/BojanczykC06}.
%However, the two definitions are equivalent, thanks
%to~\cite[Fact 2.1]{DBLP:conf/lics/BojanczykC06}.

\item $\mathcal L(e^*) = \{ (u_{f(0)}u_2 \ldots u_{f(1)-1}, u_{f(1)} \ldots u_{f(2)-1},
  \ldots) \mid \seq{u} \in \mathcal L(e)$ and
  $f: \mathbb N \rightarrow \bbNp$
  is an unbounded and nondecreasing function$\}$;
%\item $\mathcal L(e^*) = \{ (u_{f(0)}u_2 \ldots u_{f(1)-1}, u_{f(1)} \ldots u_{f(2)-1},
%  \ldots) \mid \\
%  \text{\hspace{25mm}} \vec{u} \in \mathcal L(e), \ f \text{ unbounded and nondecreasing} \}$;

\item $\mathcal L(e^B) = \{ (u_{f(0)}u_2 \ldots u_{f(1)-1},
  u_{f(1)} \ldots u_{f(2)-1}, \ldots) \mid \seq{u} \in \mathcal L(e)$ and
  $f: \mathbb N \rightarrow \bbNp$
  is an unbounded and nondecreasing function such that
  $\exists n \in \mathbb N \ \forall i \in \bbN . (f(i+1) - f(i) < n) \}$;
%\item $\mathcal L(e^B) = \{ (u_{f(0)}u_2 \ldots u_{f(1)-1},
%  u_{f(1)} \ldots u_{f(2)-1}, \ldots) \mid \\
%  \text{\hspace{25mm}} \vec{u} \in \mathcal L(e), \ f
%  \text{ unbounded, nondecreasing, and } \\
%  \text{\hspace{25mm}such that } \exists n \in \mathbb N \ \forall i . (f(i+1) - f(i) < n) \}$;

\item $\mathcal L(e^S) = \{ (u_{f(0)}u_2 \ldots u_{f(1)-1}, u_{f(1)} \ldots u_{f(2)-1},
  \ldots) \mid \seq{u} \in \mathcal L(e)$ and
  $f: \mathbb N \rightarrow \bbNp$
  is an unbounded and nondecreasing function such that
  $\forall n \in \mathbb N \ \exists k \in \bbN \ \forall i>k .
  (f(i+1) - f(i) > n) \}$.
%\item $\mathcal L(e^S) = \{ (u_{f(0)}u_2 \ldots u_{f(1)-1}, u_{f(1)} \ldots u_{f(2)-1},
%  \ldots) \mid \\
%  \text{\hspace{25mm}} \vec{u} \in \mathcal L(e), \ f
%  \text{ unbounded, nondecreasing, and } \\
%   \text{\hspace{25mm}such that } \forall n \in \mathbb N \ \exists k \ \forall i>k .
%  (f(i+1) - f(i) > n) \}$.

\end{compactitem}

Given a sequence $\seq v = (u_{f(0)}u_2 \ldots u_{f(1)-1}, u_{f(1)} \ldots
\allowbreak u_{f(2)-1}, \ldots) \in e^{\mathit{op}}$, where $\mathit{op} \in
 \{ *, B, S \}$, we
formally define the \emph{sequence of exponents of $e$ in $\seq{v}$}, denoted by
$N(\seq v)$, as the sequence $\Big( f(i+1) - f(i) \Big)_{i \in \bbN}$.
%
%It is worth emphasising that, while
While the $*$-constructor does not impose any
%property
constraint on the sequence of exponents of its operand,
%the $B$- and the $S$-constructors do. More precisely,
the $B$-constructor forces the sequence of exponents to be bounded
and the $S$-constructor forces it to be strictly unbounded, that is, its limit
inferior tends towards infinity (equivalently, the $S$-constructor imposes that  no
exponent occurs infinitely many times in the sequence).

The \emph{$\omega$-constructor} defines languages of infinite words from
languages of infinite word sequences. Let $e$ be a $BS$-regular expression.
The semantics of the $\omega$-constructor is defined as follows:
\begin{compactitem}
\item $\mathcal L(e^\omega) = \{ w \mid w = u_1u_2u_3 \ldots
  \text{ for some } \seq u \in \mathcal L(e) \}$.
\end{compactitem}

%Finally,
\emph{$\omega BS$-expressions} are defined by the grammar (we
%adopt the typographical convention of denoting
denote languages of word sequences by lowercase letters, such as $e$, $e_1$, \ldots,
and languages of words by uppercase ones,  such as $E$, $E_1$, \ldots, $R$, $R_1$, \ldots):

{\centering
$ E \ ::= \ E+E \ \mid \ R \cdot E \ \mid \ e^\omega $

}

\noindent where $R$ is a
%standard
regular expression,
%(see~\cite{} for a definition) and
$e$ is a $BS$-regular expression, and  
%the operators 
$+$ and $\cdot$ respectively denote union and concatenation of word languages
(formally, $\mathcal L(E_1 + E_2) = \mathcal L(E_1) \cup \mathcal L(E_2)$
and $\mathcal L(E_1 \cdot E_2) = \{ u \cdot v \mid u \in \mathcal L(E_1), v \in \mathcal L(E_2) \}$).\footnote{Notice the abuse of notation with the previous definition of the operators $+$ and $\cdot$ over languages of word sequences.}
As we did in the case of languages of word sequences, we will sometimes omit the concatenation operator between word languages.
%of words.

\smallskip

\noindent \textbf{\texorpdfstring{$\omega T$}{omega T}-regular languages.} \label{sec:omegatl}
We are now ready to introduce \texorpdfstring{$\omega T$}{omega T}-regular languages. From~\cite{DBLP:conf/lics/BojanczykC06}, we know that the class of $\omega BS$-regular languages is not closed under complementation, that is, there are $\omega$-languages that are the complements of $\omega BS$-regular ones while being not $\omega BS$-regular. This is the case, for instance, with the complement $\oL$ of the $\omega BS$-regular language $L = (a^Bb + a^Sb)^\omega$.
%(see Subsection~\ref{subsec:beyond}).
%In Subsection~\ref{sec:languageLdefinition}, we 
We have already pointed out the distinctive features of $\oL$, showing that $\omega$-words belonging to it are, to a certain extent, characterised by sequences of exponents where infinitely many exponents occur infinitely often.
In order to capture extended $\omega$-regular languages that satisfy such a property, we define a new class of \omegals, called \emph{\omegats}. It includes those languages that can be expressed by \emph{$\omega T$-regular expressions}, which are defined by the grammar (where $R$ is a regular expression and $a \in \Sigma$):

\smallskip

{\centering \phantomsection\label{page:Tregexp-grammar}
$
\begin{array}{lll}
  E & ::= & E + E \ \mid \ R \cdot E \ \mid \ e^\omega \\
  e & ::= & \emptyset \ \mid \ a \ \mid \ e \cdot e \ \mid \ e+e \ \mid
            \ e^* \ \mid \ e^T
\end{array}
$

}

\smallskip

%\noindent where $R$ is a regular expression and $a \in \Sigma$.

The sub-grammar rooted in the non-terminal $e$ generates the \emph{$T$-regular expressions}. The only new ingredient in the above definition is the $T$-constructor $(.)^T$, that, given a language of word sequences $e$, defines the following language:
\begin{compactitem}
\item $\mathcal L(e^T) = \{ (u_{f(0)}u_2 \ldots u_{f(1)-1}, u_{f(1)} \ldots u_{f(2)-1},
  \ldots) \mid \seq{u} \in \mathcal L(e)$ and
  $f: \mathbb N \rightarrow \bbNp$
  is an unbounded and nondecreasing function such that
%  \\ $
%%  \text{\hspace{5mm}}\vec{u} \in \mathcal L(t), \ f
%%  \text{ unbounded, nondecreasing, and such that:} \\
%%
%  \begin{array}{l@{\hspace{1mm}}l}
%    (i) & \forall n \exists i . f(i+1) - f(i) > n \text{, and} \\
%    (ii) & \forall n . [ \text{if } \exists i. f(i+1) - f(i) = n, \\
%         &  \text{then } \forall k \exists j > k . f(j+1) - f(j) = n ] \}.
%  \end{array}
%%
%  $
%
  $\exists^\omega n \in \bbN \ \forall k \in \bbN
  \ \exists i>k . (f(i+1) - f(i) = n) \},$
\end{compactitem}
where $\exists^\omega$ is a shorthand for ``there are infinitely many''.

For $\seq u \in e^T$, we define the \emph{sequence of exponents of $e$ in
$\seq{u}$}, denoted by $N(\seq u)$, exactly as we did in the case of 
%in the exact same way we have done above in the context of 
$BS$-regular expressions.
Moreover, for $\mathit{op} \in \{ *, B, S, T \}$ and $\seq u \in e^{\mathit{op}}$,
 we denote by
$N_{i}(\seq u)$ (resp., $N_{f}(\seq u)$) the set of exponents
occurring infinitely (resp., finitely) many times in $N(\seq u)$.
It is not difficult to see that the cardinality of $N_i(\seq u)$ is infinite, for every $\seq u \in e^T$, and thus the formal semantics of the $T$-constructor conforms with the intuitive one  given at the end of Subsection~\ref{sec:intuitivedefinitionofTconstructor}.
%Indeed,
%item $(i)$ guarantees the existence of infinitely many distinct exponents in the
%sequence and item $(ii)$ forces each exponent (occurring at least once) to
%occur infinitely many times in the sequence.

%For an example of an $\omega T$-regular language that is not $\omega BS$-regular and, vice versa, of an $\omega BS$-regular language that is not $\omega T$-regular, please refer to Appendix \ref{app:example}.
It is not difficult to devise an $\omega T$-regular language that is not $\omega BS$-regular and, vice versa, of an $\omega BS$-regular language that is not $\omega T$-regular.

%\newcounter{examplecounter}
%\setcounter{examplecounter}{\thetheorem}
%\begin{example} \label{ex:languages}
%An example of an $\omega T$-regular language, whose complement is not $\omega BS$-regular, is the language $L_1 = (a^Tb)^\omega$ (we provide evidences supporting this claim at the end of Section~\ref{sec:encoding}).
%
%Next,  let $L_2$ be the $\omega BS$-regular language  $(a^*b)^*a^\omega$ $+ ((a^*b)^* (a^Bb))^\omega$.
%It includes
%\begin{inparaenum}[$(i)$]
%\item words featuring only finitely many occurrences of $b$ (sub-expression $(a^*b)^*a^\omega$) and
%\item words featuring infinitely many occurrences of the sub-string $a^kb$, for at least one natural number $k \geq 0$ (and thus infinitely many occurrences of $b$, too).
%\end{inparaenum}
%Its complement $\overline{L_2}$ is the language of words featuring infinitely many occurrences of $b$ and only finitely many occurrences of the sub-string $a^kb$, for all $k \in \mathbb N$, that is, $\overline{L_2}$ is the language $(a^Sb)^\omega$, which is not $\omega T$-regular (we provide evidences supporting this claim in Appendix~\ref{app:example}).
%
%\end{example}
%\noindent We will provide supporting evidences for the status of $L_1$ and $L_2$ and their complements at the end of Section \ref{sec:encoding}.

%The (at least theoretical) interest of the $T$-constructor stems from the fact that it somehow 
%complements the $B$- and the $S$-constructor in several ways.
As we already pointed out in the introduction, one of the motivations for the proposal of the $T$-constructor stems from the fact that it somehow complements the other two with respect to the Kleene star. We can make such a claim more precise as follows. Let $\seq u \in \mathcal L(e^{\mathit{op}})$, with $\mathit{op} \in \{ B, S, T \}$. If $\seq u \in \mathcal L(e^{B})$, then $N(\seq u)$ is bounded, while if either $\seq u \in \mathcal L(e^{S})$ or $\seq u \in \mathcal L(e^{T})$ it is unbounded; moreover, if $\seq u \in \mathcal L(e^{S})$, then $N_i(\seq u) = \emptyset$, while if $\seq u \in \mathcal L(e^{T})$, then
%$N_f(\vec u) = \emptyset$.
$N_i(\seq u)$ is infinite.
%Indeed, due to the way its semantics is defined, it can be viewed as the counterpart of the $S$-constructor 
%with respect to word sequences with an unbounded sequence of exponents:
The next proposition shows that when paired with $(.)^B$ and $(.)^S$,  $(.)^T$ makes it possible to define the Kleene star
%
%  (proof in Appendix~\ref{app:proof-proposition-Kleene}).
%
 Let \emph{$BST$-regular expressions} be obtained from $BS$-regular ones by enriching them with $(.)^T$.
%
%\newcommand{\propstardefinition}{
%For every $BST$-regular expression $e$, it holds that $e^* = e^B + e^S + e^T$.
%}
%\newcounter{propstardefinitioncounter}
%\setcounter{propstardefinitioncounter}{\thetheorem}
%\begin{proposition} \label{prop:star-definition}
%\propstardefinition
%\end{proposition}
%\setcounter{theorem}{\thepropstardefinitioncounter}
%\begin{proposition}
%\propstardefinition
%\end{proposition}

\newcommand{\propositionKleene}{
  For every $BST$-regular expression $e$, it holds that $e^* = e^B + e^S + e^T$.
}
\newcounter{propositionKleenecounter}
\setcounter{propositionKleenecounter}{\thetheorem}
\begin{proposition}\label{prop:star-definition}
\propositionKleene
\end{proposition}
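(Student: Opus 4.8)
The plan is to prove the two inclusions $\mathcal L(e^B + e^S + e^T) \subseteq \mathcal L(e^*)$ and $\mathcal L(e^*) \subseteq \mathcal L(e^B + e^S + e^T)$ separately, in both cases reducing the combinatorics to the behaviour of the sequence of exponents $N(\seq u)$ and then lifting back to genuine word sequences. Throughout I take the witnessing functions $f$ to be strictly increasing (equivalently, all exponents positive), so that every block is non-empty; this is harmless for the intended use inside $(.)^\omega$ and rules out the degenerate sequences (e.g.\ one with cofinitely many empty blocks) on which the claimed identity would otherwise fail, since a shuffle is in principle free to select an empty block at every position.

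For the inclusion $\supseteq$, I would first note that $\mathcal L(e^B), \mathcal L(e^S), \mathcal L(e^T) \subseteq \mathcal L(e^*)$: each of the three constructors is obtained from $(.)^*$ by adding a constraint on the witnessing function $f$, so every witness for $e^B$, $e^S$ or $e^T$ is in particular a witness for $e^*$. Since the shuffle operator only ever relabels, position by position, blocks taken from sequences already in $\mathcal L(e^*)$, it then suffices to show that $\mathcal L(e^*)$ is closed under $+$: given $\seq u, \seq v \in \mathcal L(e^*)$ and $\seq w$ with $w_i \in \{u_i, v_i\}$ for all $i$, the sequence $\seq w$ must again be an admissible blocking of some base sequence of $\mathcal L(e)$. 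This is exactly where the technical lemma discussed below is needed.

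For the inclusion $\subseteq$, take $\seq v \in \mathcal L(e^*)$ with base $\seq u \in \mathcal L(e)$ and witnessing function $f$, and split on the set $N_{i}(\seq v)$ of exponents occurring infinitely often. If $N_{i}(\seq v)$ is infinite, then $f$ already satisfies the defining condition of $(.)^T$ (infinitely many values, each attained infinitely often), so $\seq v \in \mathcal L(e^T) \subseteq \mathcal L(e^B + e^S + e^T)$ and we are done. If $N_{i}(\seq v)$ is finite, put $M = \max N_{i}(\seq v)$ and partition the positions into $J_B = \{\, j : \text{the } j\text{-th exponent of } \seq v \text{ lies in } N_{i}(\seq v)\,\}$ and $J_S = \{\, j : \text{the } j\text{-th exponent lies in } N_{f}(\seq v)\,\}$. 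On $J_B$ the exponents are bounded by $M$, so I build a sequence $\seq p \in \mathcal L(e^B)$ that agrees with $\seq v$ on $J_B$ and carries bounded filler blocks on $J_S$; on $J_S$ every value occurs only finitely often, hence for each $K$ only finitely many positions of $J_S$ carry an exponent $\le K$, so I can build $\seq q \in \mathcal L(e^S)$ that agrees with $\seq v$ on $J_S$ (its exponents along $J_S$ tend to infinity) and carries large, growing filler blocks on $J_B$. Then $v_i = p_i$ for $i \in J_B$ and $v_i = q_i$ for $i \in J_S$, so $v_i \in \{p_i, q_i\}$ everywhere, exhibiting $\seq v$ as a shuffle of $\seq p$ and $\seq q$ and giving $\seq v \in \mathcal L(e^B + e^S) \subseteq \mathcal L(e^B + e^S + e^T)$.

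The hard part is the lifting just invoked: producing honest members of $\mathcal L(e^B)$, $\mathcal L(e^S)$ (and, for $\supseteq$, of $\mathcal L(e^*)$) whose blocks at the designated positions are exactly the prescribed words $v_i$. One cannot simply re-use $\seq u$ with the same block boundaries, because the $B$-component must stay bounded even though its designated blocks are interleaved with the arbitrarily large blocks of $\seq v$; forcing a single base sequence to be tiled consistently would make the intervening blocks large and destroy boundedness. The fix is to use a \emph{different} base sequence for each component, obtained from $\seq u$ by keeping the material of the designated blocks and inserting fresh single-element fillers in the remaining positions, and then to verify that the result is still in $\mathcal L(e)$. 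This verification --- that $\mathcal L(e)$ is closed under restricting to a selection of its blocks and inserting admissible single-element values, and dually that it is closed under the block-mixing used for $\supseteq$ --- is the real content of the proposition, and is what I would establish by structural induction on the $BST$-regular expression $e$, checking each clause $\emptyset$, $a$, $e_1\cdot e_2$, $e_1+e_2$, $e_1^*$, $e_1^B$, $e_1^S$, $e_1^T$ in turn.
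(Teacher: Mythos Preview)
Your overall architecture matches the paper's: both prove the two inclusions separately, and for $\subseteq$ both split on the behaviour of the exponent sequence and manufacture a bounded companion and an $S$-companion whose shuffle recovers $\seq v$. The paper's concrete choices are $\seq t$ with $t_{f_j}=v_{f_j}$, $t_{i_j}=u_1\cdots u_j$ (the $S$-side) and $\seq w$ with $w_{i_j}=v_{i_j}$, $w_{f_j}=v_{i_1}$ (the $B$/$T$-side); your $\seq p,\seq q$ play the same roles, with the case ``$N_i(\seq v)$ infinite $\Rightarrow$ already in $e^T$'' pulled out in front rather than absorbed into the $\seq w$-branch as the paper does. So at the level of strategy there is no real divergence.

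Where you differ is in rigour, and here you are ahead of the paper on two points it simply asserts. First, the paper calls $\mathcal L(e^B+e^S+e^T)\subseteq\mathcal L(e^*)$ ``trivial'' from the three inclusions $\mathcal L(e^B),\mathcal L(e^S),\mathcal L(e^T)\subseteq\mathcal L(e^*)$; as you note, this needs $\mathcal L(e^*)$ to be closed under the shuffle $+$, which is not free. Your remark about strictly increasing $f$ is on target: with merely nondecreasing $f$ one can take $\seq v=(\epsilon,a,\epsilon,a,\ldots)$ and $\seq v'=(a,\epsilon,a,\epsilon,\ldots)$ in $\mathcal L(a^B)$ and shuffle them to the all-$\epsilon$ sequence, which is \emph{not} in $\mathcal L(a^*)$; so without your convention the inclusion literally fails. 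Second, the paper writes ``Clearly, $\seq t\in\mathcal L(e^S)$'' and ``thus $\seq w\in\mathcal L(e^B)$'', but the obvious underlying base sequences (obtained by concatenating the $u$-factors actually used) need not lie in $\mathcal L(e)$ for arbitrary $BST$-expressions $e$: take $e=a^S$, $\seq u=(a,a^2,a^3,\ldots)$ and exponents $1,2,1,3,1,4,\ldots$; then the paper's $\seq w$ has $a$ in every even position and cannot be a $B$-blocking of anything in $\mathcal L(a^S)$. You saw this and flagged the ``lifting'' as the real content.

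That said, your resolution is only a promissory note. You propose to prove, by structural induction on $e$, that $\mathcal L(e)$ is closed under ``keep the material of the designated blocks and insert fresh single-element fillers''. This is the right shape of lemma, but the inductive steps for $e_1^B$, $e_1^S$, $e_1^T$ are exactly where the difficulty concentrates (the filler choices must themselves respect the boundedness/unboundedness/infinitely-many-infinitely-often constraints of the inner constructor, and the designated-block material is being reindexed, not merely shuffled). Neither you nor the paper carries this out; the paper hides it behind ``clearly'', you name it and stop. So your proposal is not wrong, but the step you correctly identify as ``the real content of the proposition'' is precisely the step that remains to be written.
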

\begin{proof}
As $\mathcal L(e^B) \subseteq \mathcal L(e^*)$, $\mathcal L(e^S) \subseteq \mathcal L(e^*)$, and $\mathcal L(e^T) \subseteq \mathcal L(e^*)$, it trivially holds that $\mathcal L(e^B + e^S + e^T) \subseteq \mathcal L(e^*)$. 

To prove the converse inclusion, we assume that $\seq v \in \mathcal L(e^*)$ and we show that $\seq v \in \mathcal L(e^B + e^S + e^T)$. By the semantics of $e^*$, $\seq v = (u_{f(0)}u_2 \ldots u_{f(1)-1}, u_{f(1)} \ldots u_{f(2)-1}, \ldots)$, for a word sequence $\seq u \in \mathcal L(e)$ and an unbounded and nondecreasing function $f : \mathbb N
\rightarrow \bbNp$, with $f(0) = 1$.

Let $N(\seq v)$ be the sequence of exponents $(n_1,n_2, \ldots)$. If $N(\seq v)$ is bounded, then $\seq v \in \mathcal L(e^B) \subseteq \mathcal L(e^B + e^S + e^T)$. Otherwise, let $I = \langle i_1,i_2, \ldots \rangle$ be the increasing
sequence of indexes $i_j$ such that $n_{i_j} \in N_{i}(\seq v)$ %for every $i_j \in I$
and $F = \langle f_1,f_2, \ldots \rangle$ be the increasing sequence of indexes $f_j$ such that $n_{f_j} \in N_{f}(\seq v)$. %for every $f_j \in F$.
It clearly holds that $I \cup F = \bbNp$.
Now, let $\seq t = (t_1, t_2, \ldots)$ be the word sequence such that $t_{f_j} =
v_{f_j}$ for every $f_j \in F$ and $t_{i_j} = u_1 \ldots u_j$ for every $i_j
\in I$.
Clearly, $\seq t \in \mathcal L(e^S)$.
Moreover, let $\seq w = (w_1, w_2, \ldots)$ be the word sequence such that
$w_{i_j} = v_{i_j}$ for every $i_j \in I$ and $w_{f_j} = v_{i_1}$ for every $f_j
\in F$.
If $N_i(\seq v)$ is finite, then $N(\seq w)$ is bounded by
$\max(N_i(\seq v))$, and thus $\seq w \in \mathcal L(e^B)$; otherwise,
$N_i(\seq w)$ ($= N_i(\seq v)$) is infinite, that is, there are infinitely many
exponents in $\seq w$ occurring infinitely often, and thus $\seq w \in
\mathcal L(e^T)$.
Hence, $\seq w \in \mathcal L(e^B) \cup \mathcal L(e^T) \subseteq \mathcal L(e^B
+ e^T)$.
Since $\seq v$ is such that $v_k = t_k$, if $k \in F$, and $v_k = w_k$, if $k \in
I$, 
%we have that 
$\seq v \in \mathcal L(e^B + e^S + e^T)$.
\end{proof}

%, where $n_i = f(i) - f(i-1)$ for every $i \in \mathbb N^+$.
%
%Otherwise, we partition $N$ into $N_{i}$, containing those elements that occur infinitely often in $N$, and $N_{f}$, 
%containing those elements that occur only finitely many times in $N$.

%\dariomarginpartodo{remove this completely?}
%In view of Proposition \ref{prop:star-definition}, we conjecture the class obtained by extending $\omega{BS}$-regular languages with the $T$-constructor to be closed under complementation. By the results given in~\cite{DBLP:conf/stacs/BojanczykPT16}, we know that any extension of the class of $\omega{BS}$-regular languages that is closed under complementation (and that preserves the other closure properties exhibited by
%$\omega{BS}$-regular languages) turns out to be undecidable.
%However, we believe it possible to recover decidability by restricting to suitable subclasses of $\omega{BST}$-regular languages, e.g., $\omega{BT}$- and $\omega{ST}$-regular languages, with possibly weakened closure properties.

%thus making $\omega{T}$-regular languages worth being studied.
%\marginpar{\tiny La frase finale della footnote (``However \ldots'') non mi
%piace tanto, almeno cosi come e' scritta, ma l'ho inserita comunque perche'
%suggerita a tempi da Angelo nella risposta ai revisori di LICS.
%Rischiamo di darci la zappa sui piedi dicendo che vogiamo trovare la chiusura
%e poi limitarla per riottenere decidibilita'.
%Dobbiamo pensare a come dirla in una maniera diversa e meno rischiosa.}

%

%%% Local Variables:
%%% TeX-master: "main"
%%% End:

\section{Counter-check automata}\label{sec:automata}

%\begin{figure}[t]
\begin{wrapfigure}[12]{r}[0pt]{60mm}

%\vspace{-22mm}

\centering

%\hspace{-4mm}
%\parbox{52.5mm}{
%\includegraphics[scale=0.65]{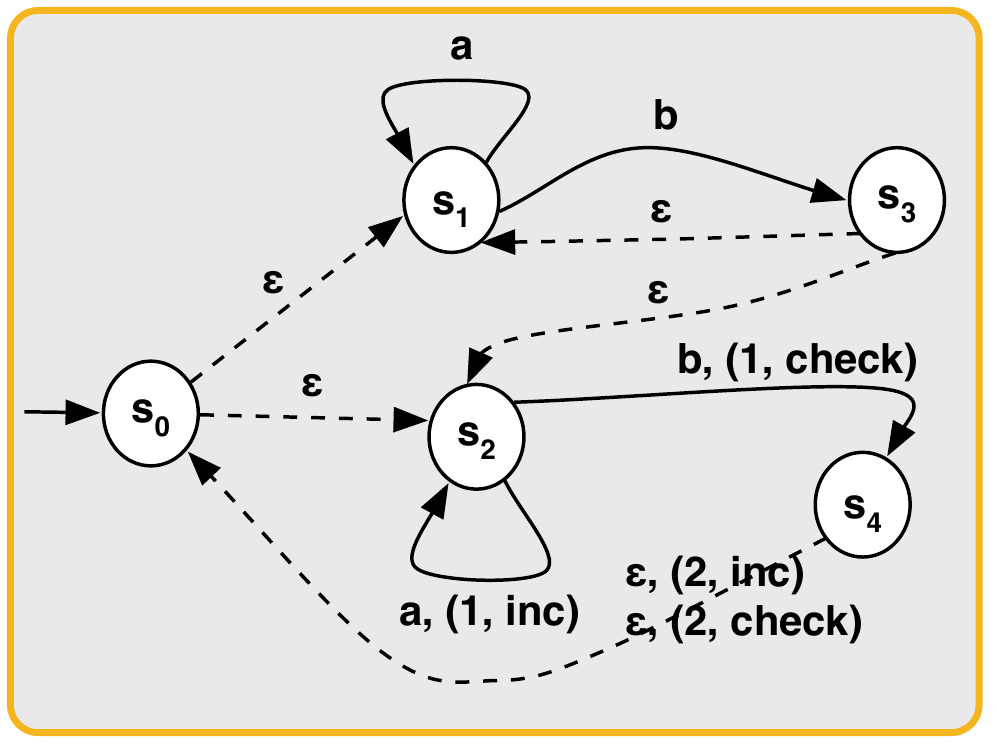}
\includegraphics[scale=0.53]{automa}

\vspace{-3mm}

\caption{\label{fig:automata}A \cca for
  the language $((a^*b)^*a^Tb)^\omega$ ($N=2$).}
%}
\end{wrapfigure}
%\end{figure}
%

In this section, we introduce a new class of automata, called counter-check
automata, and we show that their emptiness problem is
decidable in PTIME.
In the next section, we will show that they are expressive enough to encode $\omega{T}$-regular expressions.

%\subsection{Definition}
%

A \emph{counter-check automaton} (an example is given in \figurename{}~\ref{fig:automata}) is an automaton equipped with a fixed number of counters. A transition can possibly \emph{increment} or \emph{reset} one of them (or do nothing). We refer to reset operations as \emph{check} operations to put the emphasis on the fact that computations keep trace of the evolution of the counter values. In particular, the acceptance condition depends on the sequences of \emph{check values} (i.e, the values when a check operation is performed) for all counters.

\begin{definition}[\cca]
  A \emph{counter-check automaton}  (\emph{\cca} for short) is a quintuple ${\cal A} = (S,\Sigma,s_0,N,\Delta)$, where $S$ is a finite set of states, $\Sigma$ is a finite alphabet, $s_0\in S$ is the initial state, $N \in \bbNp$ is the number of counters, and $\Delta\subseteq S\times(\Sigma \cup \{\epsilon\}) \times S \times  (\{1,\ldots,N\} \times \{no\_op, inc, check\})$ is a transition relation, subject to the constraint: if $(s,\sigma,s', (k,op)) \in \Delta$ and $op = no\_op$, then $k = 1$.
\end{definition}

A \emph{configuration} of a \cca ${\cal A}=(S,\Sigma,s_0,N, \Delta)$ is a pair $(s, \cvector)$, where $s \in S$ and $\cvector \in \bbN^N$ is called \emph{counter vector}. For $\cvector \in \bbN^N$ and $i \in \{ 1, \ldots, N\}$, let $\cvectorvaluevi$ be the $i$-th component of \cvector, i.e., the value of the $i$-th counter.

Let ${\cal A}=(S,\Sigma,s_0,N,\Delta)$ be a \cca.
% be a $CQ$ automaton.
We define a ternary relation $\rightarrow_{\cal A}$ over pairs of configurations and symbols in $\Sigma \cup \{\epsilon\}$ such that for all configuration pairs $(s, \cvector),(s', \cvector')$ and $\sigma\in \Sigma \cup \{\epsilon\}$, $(s, \cvector) \rightarrow^{\sigma}_{\cal A} (s', \cvector')$ iff there is $\delta= (s,\sigma,s',$ $(k,op)) \in \Delta$ such that $\cvectorvalue[v']{h}=\cvectorvalue{h}$ for all $h\neq k$, and
\begin{compactitem}
\item  if $op=no\_op$, then $\cvectorvalue[v']{k}=\cvectorvalue{k}$;
\item if $op=inc$, then $\cvectorvalue[v']{k}=\cvectorvalue{k} + 1$;
\item if $op = check$, then  $\cvectorvalue[v']{k}=0$.
\end{compactitem}
In such a case, we say that $(s, \cvector) \rightarrow^{\sigma}_{\cal A} (s', \cvector')$ via $\delta$.
Let $\rightarrow^{*}_{\cal A}$ be the reflexive and transitive closure of $\rightarrow^{\sigma}_{\cal A}$
(where we abstract away symbols in $\Sigma \cup \{ \epsilon\}$).
%
%defined as: (i)  $(s, C) \rightarrow^{*}_{\cal A} (s, C)$ for all configurations $(s,C)$; (ii) for every  pair of configurations $(s,C), (s', C')$ such that $(s,C)\rightarrow^{*}_{\cal A}(s', C')$  if there exist $\sigma \in \Sigma \cup \{ \epsilon\}$ and $(s'', C'')$ with $(s',C')\rightarrow^{\sigma}_{\cal A} (s'', C'')$  then $(s,C)\rightarrow^{*}_{\cal A}(s'', C'')$.
%% and $(s',C')\rightarrow^{*}_{\cal A}(s'', C'')$.
%
The \emph{initial configuration} of $\cal A$ is the pair $(s_0,\cvectorinitial)$, where for each $k \in \{ 1, \ldots, N \}$ we have $\cvectorinitialvalue{k} = 0$. A \emph{computation} of ${\cal A}$ is an infinite sequence of configurations $\cC=(s_0,\cvectorinitial)(s_1, \cvectorvsub{1})\ldots$, where, for all $i \in \mathbb N$,
%$(s_0,C_0)$ is the initial configuration and
$(s_i,\cvectorvsub{i}) \rightarrow^{\sigma_i}_{\cal A}(s_{i+1}, \cvectorvsub{i+1})$ for some $\sigma_i \in \Sigma\cup\{\epsilon\}$ (see Figure~\ref{fig:computation}).
For a computation $\cC =(s_0,\cvectorinitial)(s_1, \cvectorvsub{1})\ldots$ we
let \infcheckCk ($k \in \{ 1, \ldots, N \}$) denote the set $\{ n \in \bbN \mid
\forall h \exists i > h $ such that $\cvectorvaluevsubik = n$ and
$\cvectorvaluevsub{i+1}{k} = 0\}$, that is, \infcheckCk is the set of values of
the $k$-th counter that are checked infinitely often along $\cC$.
Given two configurations $(s_i,\cvectorvsub{i})$ and $(s_j,
\cvectorvsub{j})$ in $\cC$, with $i\leq j$, we say that $(s_j, \cvectorvsub{j})$
is \emph{$\epsilon$-reachable} from $(s_i,\cvectorvsub{i})$, written
$(s_i,\cvectorvsub{i}) \rightarrow^{*\epsilon}_{\cal A}(s_j, \cvectorvsub{j})$,
if $(s_{j'-1},\cvectorvsub{j'-1}) \rightarrow^{\epsilon}_{\cal A}(s_{j'},
\cvectorvsub{j'})$ for all $j' \in \{i+1, \ldots, j \}$.

\begin{figure*}[t!]
{

\hspace{.7mm}
\includegraphics[scale=0.663]{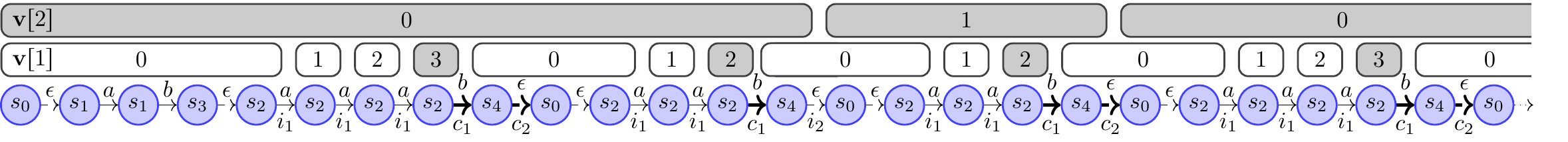}

}

\caption{\label{fig:computation}A prefix of a computation of the automaton in Figure \ref{fig:automata}.
A configuration is characterised by a circle (state) and the rounded-corner rectangles above it (counter vector). $\cvectorvaluevi$  is a counter vector component. Checked values for counters are highlighted in gray, with the corresponding transitions being written in boldface.}
%s of counter-queue configurations.}
\end{figure*}

A \emph{run} \phantomsection\label{txt:run-definition}
$\pi$ of $w$ on $\mathcal A$ is a computation $\pi = (s_0,\cvectorinitial)(s_1,\cvectorvsub{1}) \ldots$ for which there exists an increasing function $f: \bbNp \rightarrow \bbN $, called \emph{trace of $w$ in $\pi$ wrt.~$\cA$}, such that:
\begin{compactitem}
\item $(s_0, \cvectorinitial) \rightarrow^{*\epsilon}_{\cal A}
  (s_{f(1)},\cvectorvsub{f(1)})$, and
\item for all $i\geq 1$, $(s_{f(i)},\cvectorvsub{f(i)})\rightarrow^{w[i]}_{\cal A}
  (s_{f(i)+1}, \cvectorvsub{f(i)+1})$ and $(s_{f(i)+1}, \cvectorvsub{f(i)+1})
  \rightarrow^{*\epsilon}_{\cal A} (s_{f(i+1)},\cvectorvsub{f(i+1)})$.
\end{compactitem}

\label{txt:accepting-run-definition}
\noindent
A run $\pi = (s_0,\cvectorinitial)(s_1,\cvectorvsub{1}) \ldots$ of $w$ on
$\mathcal A$ is \emph{accepting} iff $| \infcheckpik | = +\infty$ for every $k \in \mathcal \{ 1, \ldots, N \}$.
An $\omega$-word $w \in \Sigma^{\omega}$ is \emph{accepted} by ${\cal A}$ iff
there exists an accepting run of $w$ on $\mathcal A$; we denote by $\cL({\cal A}
)$ the set of all $\omega$-words $w\in \Sigma^\omega$ that are accepted by
${\cal A}$, and we say that $\cal A$ \emph{accepts} the language $\cL({\cal
  A})$.
As an example, Figure~\ref{fig:automata} depicts a \cca with two counters ($N = 2$) accepting the
language $((a^*b)^*a^Tb)^\omega$. (Note that an automaton for the same language with one
counter only can be devised as well.)

%\vspace{-1cm}
%%% Local Variables:
%%% TeX-master: "main"
%%% End:

\newcommand{\oee}{\overline{e}}
\newcommand{\ob}{\overline{b}}
\newcommand{\oj}{\overline{j}}

\newcommand{\hb}{\hat{b}}
\newcommand{\he}{\hat{e}}
\renewcommand{\wp}{\widecheck{p}}
\newcommand{\wb}{\widecheck{b}}
\newcommand{\we}{\widecheck{e}}
\newcommand{\oC}{\overline{C}}
\newcommand{\wpe}{\widecheck{pe}}
\newcommand{\wpb}{\widecheck{pb}}

\newcommand{\maxinc}[1]{inc\_max_{#1}}
\newcommand{\maxcounter}[1]{counter\_max_{#1}}

\subsection{Decidability of the emptiness problem} \label{sec:non-empty}

We now prove that the emptiness problem for \cca is decidable in PTIME. The proof consists of 3 
%main 
steps:  (i) we replace general \cca by simple ones;
%(ii) we associate a two-player game with each simple $CQ$ automaton;
(ii) we prove that their emptiness can be decided by checking the existence of finite witnesses of accepting runs;
(iii) we show that the latter 
%condition 
can be verified by checking for emptiness a suitable NFA.

%\paragraph{Simple \cca.}
\noindent{\bf Simple \cca.}
A \cca ${\cal A} = (S,\Sigma,s_0,N,\Delta)$ is \emph{simple} iff for each $s\in S$ either $|\{ (s,\sigma, s', (k,op))\in \Delta \}| = 1$ or $op=no\_op$, $k=1$, and $\sigma=\epsilon$  for all  $(s,\sigma, s', (k,op)) \in \Delta$.
Basically, a simple \cca has states of two kinds: those in which it can fire exactly one action and those in which it makes a nondeterministic choice.
Moreover, for all pairs of  configurations $(s,\cvectorv),(s',\cvectorvprime)$
% of a simple $CQ$ automaton ${\cal A} = (S,\Sigma,s_0,N,\Delta)$
with $(s,\cvectorv)\rightarrow^{\sigma}_{\cA}(s',\cvectorvprime)$, the transition $\delta\in \Delta$ that has been fired in $(s,\cvectorv)$
is uniquely determined by
%the states
$s$ and $s'$. By exploiting \emph{$\epsilon$-transitions}, that is, transitions of the form $(s,\epsilon, s', (k,op))$, and by adding a suitable number of states, it can be easily shown that every \cca ${\cal A}$ may be turned into a simple one ${\cal A}'$ such that $\cL(\cA)=\cL(\cA')$.
Without loss of generality, in the rest of the section we restrict our attention
to simple \cca.

The set of states of a \cca
can be partitioned in four subsets:
\begin{inparaenum}[$(i)$]
\item the set of states $s$ from which only one transition of the form $(s,\sigma, s', (k,check))$  can be fired ($check_k$ states);
\item the set of states $s$ from which only one transition of the form $(s,\sigma, s', (k,inc))$ can be fired
($inc_k$ states);
\item the set of states $s$ from which only one transition of the form $(s,\sigma, s', (1,no\_op))$, with $\sigma \neq \epsilon$, can be fired ($sym$ states);
\item the set of states $s$ from which possibly many transitions of the form $(s,\epsilon, s', (1,no\_op))$ can be fired
($choice$ states).
\end{inparaenum}

Let ${\cal A} = (S,\Sigma,s_0,N,\Delta)$ be a \cca.
%A \emph{partial computation} of $\cA$ is a \emph{finite sequence}
%$\cP=(s'_0, \cvectorvprimesub{0}) \ldots (s'_n,\cvectorvprimesub{n})$ such that, for all
%$i \in \{ 1, \ldots, n-1 \}$,
%$(s'_i, \cvectorvprimesub{i}) \rightarrow^{\sigma_i}_{\cA} (s'_{i+1}, \cvectorvprimesub{i+1})$, for some
%$\sigma_i \in \Sigma\cup \{\epsilon\}$.
%%
%If $(s'_0,\cvectorvprimesub{0}) = (s_0,\cvectorinitial)$, i.e., $(s'_1,\cvectorvprimesub{0})$
%is the initial configuration of $\cA$, then $\cP$ is
%a \emph{prefix computation} of $\cA$.
A \emph{prefix computation} of $\cA$ is a finite prefix of a computation of
$\cA$;
formally, it is a finite sequence
$\cP=(s_0, \cvectorvsub{0}) \ldots (s_n,\cvectorvsub{n})$ such that, for all
$i \in \{ 0, \ldots, n-1 \}$,
$(s_i, \cvectorvsub{i}) \rightarrow^{\sigma_i}_{\cA} (s_{i+1}, \cvectorvsub{i+1})$, for some
$\sigma_i \in \Sigma\cup \{\epsilon\}$.
We denote by $\PrefixesA$ the sets of all
prefix computations of $\cA$.
%
%Clearly, $\PrefixesA \subseteq \PartialA$.
% holds.
%
%Given a prefix computation
%$\cP=(s_0,\cvectorinitial)\ldots(s_n,\cvectorvsub{n})$ and a partial computation
%$\cP' =(s'_0, \cvectorvprimesub{0})\ldots(s'_m,\cvectorvprimesub{m})$, we say
%that \emph{$\cP $ can be extended with $\cP'$} iff $\cP''=\cP \cdot \cP' =(s_0,
%\cvectorinitial)\ldots(s_n,\cvectorvsub{n}) (s'_0, \cvectorvprimesub{0})\ldots(s'_m, \cvectorvprimesub{m})$ is a prefix computation of $\mathcal A$
%($\cP''$ is said to be an \emph{extension} of $\cP$).
%%In such a case, we say that $\cP''$ is an extension of $\cP$.
%
For every prefix computation $\cP=(s_0,
\cvectorvsub{0}) \ldots (s_n,\cvectorvsub{n}) \in \PrefixesA$ and
$s \in S$, it holds that if $(s_n,
\cvectorvsub{n})\rightarrow^\sigma_{\cA} (s, \cvectorv)$, for some counter
vector \cvector and some $\sigma \in \Sigma \cup \{ \epsilon \}$, then \cvectorv
is uniquely determined by $s_n$, $\cvectorvsub{n}$, and $s$, that is,
there is no $\cvectorvprime \neq \cvectorv$ such that $(s_n,
\cvectorvsub{n})\rightarrow^{\sigma'}_{\cA} (s, \cvectorvprime)$, for any
$\sigma'$.
%We define the \emph{extension of $\cP$ with $s$}, denoted by $\cP << s$,  as
%$(s_0, C_0) \ldots (s_n,C_n)(s,C)$.

%\paragraph{Finite witnesses of accepting runs.}
\noindent{\bf Finite witnesses of accepting runs.}
We show now how to decide \cca emptiness by making use of the notion of
accepting witness for a \cca.

\begin{definition}[Accepting witness] \label{def:winningwitness}
  Let $\cA = (S,\Sigma,s_0,N,\Delta)$ be a \cca.
  A prefix computation $\cP=(s_0, \cvectorinitial)\ldots \allowbreak
  (s_n,\cvectorvsub{n})
  \in \PrefixesA$ is an \emph{accepting witness} (for $\mathcal A$) iff there
  are $2N+2$ indexes $\indexbegin < b_1 < e_1 < \ldots < b_N < e_N <
  \indexend$ such that $0 \leq \indexbegin$, $\indexend \leq n$, and the following conditions hold:
%
% Basically we are looking for a finitely representable witness for a winning
% strategy $str$.
%
\begin{compactenum}
\item \label{item:witness-nonepsilon}
  a non-$\epsilon$-transition can be fired from $s_{\indexbegin}$;
\item \label{item:witness-be}
  $s_{\indexbegin} = s_{\indexend}$ and, for each $k \in \{ 1, \ldots, N \}$,
  $s_{b_k} = s_{e_k}$, $s_{b_k}$ is an $inc_k$ state, and
  $s_j$ is not a $check_k$ state for any $j$ with $b_k \leq j \leq e_k$;
\item \label{item:witness-check}
  for each $k \in \{ 1, \ldots, N \}$, there is $j$, with $e_N < j <
  \indexend$,
  such that $s_j$ is a $check_k$ state.
%\item let $J = \{ j \in \bbN \mid 0 \leq j \leq \indexlimit$ and $s_j$ is a
%  $check_k$
%  state for some $k \}$; there is a set of indexes $\overline{J} = \{ \ob_j,
%  \oee_j \mid j \in J \}$ such that for all $j \in J$, with $s_{j}$ a $check_k$
%  state,
%(i) $\indexlimit < \ob_{j} < \oee_{j} < \indexend$,
%(ii) either $\oee_{j} < \ob_{j'}$ or $\oee_{j'} < \ob_{j}$, for all
%$j' \neq j$,
%(iii) $s_{\ob_{j}}$ and $s_{\oee_{j}}$ are
%$check_k$ states and there is no $check_k$ state in between them, and
%(iv) there are exactly $\cvectorvaluevsub{j}{k}$ many $inc_k$ states in between
%$s_{\ob_{j}}$ and $s_{\oee_{j}}$.
\end{compactenum}
\end{definition}
\noindent
An accepting witness for $\mathcal A$ can be seen as a finite representation of
an accepting run of some $\omega$-word on $\mathcal A$.
Thus, deciding whether a \cca $\mathcal A$ accepts the empty language amounts to
searching $\PrefixesA$ for accepting witnesses.
(The proof of the next lemma is omitted for lack of space.)
%
% given in Appendix~\ref{app:proof-winning}.)
%
%
\newcommand{\lemmawinning}{
  Let ${\cal A}$ be a \cca. Then, $\mathcal L(\mathcal A) \neq
  \emptyset$ iff $\PrefixesA$ contains an accepting witness.
}
\newcounter{lemmawinningcounter}
\setcounter{lemmawinningcounter}{\thetheorem}
\begin{lemma}\label{lemma:winning}
\lemmawinning
\end{lemma}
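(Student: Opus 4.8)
The plan is to prove the two implications separately, treating an accepting witness as a finite \emph{template} that can be pumped into an accepting run and, conversely, distilled out of any accepting run by pigeonhole arguments. For the right-to-left direction, assume $\cP = (s_0,\cvectorinitial)\ldots(s_n,\cvectorvsub{n})$ is an accepting witness with indexes $\indexbegin < b_1 < e_1 < \ldots < b_N < e_N < \indexend$. Since $s_{\indexbegin}=s_{\indexend}$ and $s_{b_k}=s_{e_k}$ for each $k$ (condition~\ref{item:witness-be}), both the outer segment $[\indexbegin,\indexend]$ and every inner segment $[b_k,e_k]$ are cycles in the state graph of $\cA$ and can be traversed repeatedly. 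I would build a computation that first executes the run up to position $\indexbegin$ and then repeats the outer cycle forever, where on the $t$-th repetition the $k$-th inner cycle is traversed $m_k^{(t)}$ times before proceeding. Because a non-$\epsilon$-transition is fired from $s_{\indexbegin}$ (condition~\ref{item:witness-nonepsilon}), each outer iteration consumes at least one symbol, so the symbols read form an $\omega$-word $w$ and the computation is a run of $w$ (the trace function marks the positions of the non-$\epsilon$-transitions).

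It remains to choose the counts $m_k^{(t)}$ so that the run is accepting, i.e.\ $|\infcheckpik| = +\infty$ for every $k$. The relevant check of counter $k$ in the $t$-th iteration is the first $check_k$ occurring after position $e_k$, which exists because condition~\ref{item:witness-check} places such a check inside $[e_N,\indexend]$. Since $s_{b_k}$ is an $inc_k$ state and condition~\ref{item:witness-be} forbids any $check_k$ inside $[b_k,e_k]$, each traversal of the inner cycle raises counter $k$ by some fixed $\delta_k\geq 1$, and all these increments survive up to that check; hence its value is a (fixed) affine function of the repetition counts whose coefficient on the current $m_k^{(t)}$ is positive, and it ranges over an infinite set as $m_k^{(t)}$ grows. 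Driving the counts through a dovetailing schedule, so that every attainable configuration of counts recurs infinitely often (e.g.\ following the pattern $0,0,1,0,1,2,0,1,2,3,\ldots$), makes each attainable check-value of each counter recur infinitely often, yielding $|\infcheckpik| = +\infty$ simultaneously for all $k$, and therefore $w\in\cL(\cA)$.

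For the left-to-right direction, I would start from an accepting run $\pi=(s_0,\cvectorinitial)(s_1,\cvectorvsub{1})\ldots$ of some $w\in\cL(\cA)$ and assemble the witness greedily along $\pi$. As $w$ is infinite, infinitely many non-$\epsilon$-transitions are fired, so by pigeonhole some state $s^\ast$ is the source of infinitely many of them; a non-$\epsilon$-transition can be fired from $s^\ast$, and $s^\ast$ will play the role of $s_{\indexbegin}=s_{\indexend}$. The key step exploits acceptance: for each $k$, $|\infcheckpik| = +\infty$ forces infinitely many checks of counter $k$ at values exceeding $|S|$, occurring cofinally in $\pi$; between the reset preceding such a check and the check itself the counter is raised more than $|S|$ times with no intervening $check_k$, so two of these $inc_k$-transitions share their source state, producing an inner $inc_k$-cycle with no $check_k$ inside it. These cycles occur cofinally, so I can locate them successively in the order $k=1,\ldots,N$, obtaining $[b_1,e_1] < \ldots < [b_N,e_N]$. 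After $e_N$, since every counter is checked infinitely often, I find a $check_k$ state for each $k$, and finally take $\indexend$ to be the next occurrence of $s^\ast$; fixing $\indexbegin$ as an occurrence of $s^\ast$ preceding the whole construction yields a prefix of $\pi$ satisfying conditions~\ref{item:witness-nonepsilon}--\ref{item:witness-check}.

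The main obstacle, in my view, is the unusual acceptance condition $|\infcheckpik| = +\infty$ and its interaction across the $N$ counters. In the forward direction it forces one to realize \emph{infinitely many values, each infinitely often}, simultaneously for every counter, while the inner cycles may interfere---a cycle dedicated to counter $k$ may also increment some counter $k'$, and the baseline accrued between consecutive checks of $k$ may depend on the counts used in adjacent iterations---so the dovetailing schedule must be arranged so that this coupling does not spoil the ``infinitely often'' guarantee for any single value. In the backward direction the matching difficulty is fitting all $N$ independent pumpable $inc_k$-cycles together with the $N$ checks and the two copies of $s^\ast$ into a single finite prefix in the prescribed order, which is exactly what the cofinality of each type of event along $\pi$ secures. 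Underlying both arguments is the monotonicity of each counter between checks (it is only reset by $check$ and only raised by $inc$), which is what licenses the pigeonhole counting and the affine-growth estimates.
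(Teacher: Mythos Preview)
Your proposal is correct, and the paper actually omits the proof of this lemma (``The proof of the next lemma is omitted for lack of space''), so there is no official argument to compare against. Your two-direction scheme---pumping the nested cycles of the witness according to a dovetailing schedule for the backward direction, and a pigeonhole extraction of $inc_k$-cycles from large checked values for the forward direction---is the natural approach and works.

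Two small points worth making explicit in a written-out version. First, in the backward direction, condition~\ref{item:witness-nonepsilon} together with the \emph{simple} \cca assumption guarantees that the transition actually fired from $s_{\indexbegin}$ in the witness is the non-$\epsilon$ one (a state with a non-$\epsilon$ outgoing transition has exactly one outgoing transition), so each outer iteration really does consume a symbol. Second, the reason the dovetailing works cleanly despite the cross-counter coupling you flag is that condition~\ref{item:witness-check} places a $check_k$ for every $k$ inside the \emph{fixed} tail $(e_N,\indexend)$ of each outer iteration; hence the counter vector at the beginning of every iteration (after the first) is a fixed constant, and the value $h_k$ you track---counter $k$ at its first $check_k$ after the $k$-th inner loop---is therefore a function of the current iteration's tuple $(m_1^{(t)},\ldots,m_N^{(t)})$ alone, affine with coefficient $\delta_k \geq 1$ on $m_k^{(t)}$. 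So repeating every tuple infinitely often (or even just setting all $m_j^{(t)}$ equal to a single $m^{(t)}$ that ranges over the pattern $1,1,2,1,2,3,\ldots$) yields $|\infcheckpik|=\infty$ for each $k$ simultaneously.
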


%\paragraph{From \cca to NFA.}
\noindent{\bf From \cca to NFA.}
Thanks to Lemma~\ref{lemma:winning}, deciding the emptiness problem for a \cca $\cA$ amounts to searching
$\PrefixesA$ for an accepting witness. Since we restricted ourselves to
%only consider 
simple \cca, we can safely identify elements of $\PrefixesA$ with their sequence of states and thus, by slightly abusing the notation, we can write, e.g., $s_0 s_1 \ldots s_n \in \PrefixesA$ for $(s_0, \cvectorinitial)\ldots(s_n,\cvectorvsub{n}) \in \PrefixesA$.
Given a \cca $\cA$, let \languagewitnessA be the language
of finite words over the alphabet $S$ (the set of states of $\cA$)
that are accepting witnesses for $\cA$.
It is easy to see that $\mathcal L(\cA) \neq \emptyset$ if and only if
$\languagewitnessA \neq \emptyset$.
In what follows, for a \cca $\cA$ we build a nondeterministic finite automata
(NFA) whose language is exactly $\languagewitnessA$.
Since the emptiness problem for NFA is decidable, so is the one for \cca.

\begin{figure*}[t]
  \centering

\scalebox{.7}{%
  \begin{tikzpicture}
%    [->,>=stealth',shorten >=1pt,auto,node distance=2.1cm,semithick]
    [->,>=stealth',semithick]
    \tikzstyle{mystate}=[draw,circle,text width=11mm,align=center,scale=.75,
    inner sep=0mm]
    \tikzstyle{mystar}=[draw,star,star points=7,text width=4mm,align=center,
    scale=.7,fill=gray!35]

    \node[mystate] (q0) {$q_0$};
    \node[mystate,above right=17.5mm and 20mm of q0] (q1s1) {$q^1_{s'_1}$};
    \node[mystate,below right=17.5mm and 20mm of q0] (q1sm) {$q^1_{s'_m}$};

    \node[mystate,above right=4mm and 15mm of q1s1] (q1s1s1)
    {$q^1_{s'_1s^1_1}$};
    \node[mystate,below right=4mm and 15mm of q1s1] (q1s1sp)
    {$q^1_{s'_1s^1_{p_1}}$};

    \node[mystate,above right=4mm and 15mm of q1sm] (q1sms1)
    {$q^1_{s'_ms^1_1}$};
    \node[mystate,below right=4mm and 15mm of q1sm] (q1smsp)
    {$q^1_{s'_ms^1_{p_1}}$};

    \node[mystate,right=33mm of q1s1] (q2s1) {$q^2_{s'_1}$};
    \node[mystate,right=33mm of q1sm] (q2sm) {$q^2_{s'_m}$};

    \node[mystate,right=20mm of q2s1] (qNs1) {$q^N_{s'_1}$};
    \node[mystate,right=20mm of q2sm] (qNsm) {$q^N_{s'_m}$};

    \node[mystate,above right=4mm and 15mm of qNs1] (qNs1s1)
    {$q^N_{s'_1s^N_1}$};
    \node[mystate,below right=4mm and 15mm of qNs1] (qNs1sp)
    {$q^N_{s'_1s^N_{p_N}}$};

    \node[mystate,above right=4mm and 15mm of qNsm] (qNsms1)
    {$q^N_{s'_ms^N_1}$};
    \node[mystate,below right=4mm and 15mm of qNsm] (qNsmsp)
    {$q^N_{s'_ms^N_{p_N}}$};

    \node[mystate,right=33mm of qNs1] (hatq1s1a) {$\hat q^1_{s'_1}$};
    \node[mystate,right=33mm of qNsm] (hatq1sma) {$\hat q^1_{s'_m}$};

    \node[mystar,right=15mm of hatq1s1a] (star1a) {$*$};
    \node[mystar,right=15mm of hatq1sma] (starma) {$**$};

    \node[mystar,below right=55mm and 0mm of q0] (star1) {$*$};
    \node[mystar,below=17.5mm of star1] (starm) {$**$};

    \node[mystate,right=15mm of star1] (hatq1s1) {$\hat q^1_{s'_1}$};
    \node[mystate,right=15mm of starm] (hatq1sm) {$\hat q^1_{s'_m}$};

    \node[mystate,right=15mm of hatq1s1] (hatq2s1) {$\hat q^2_{s'_1}$};
    \node[mystate,right=15mm of hatq1sm] (hatq2sm) {$\hat q^2_{s'_m}$};

    \node[mystate,right=42.5mm of hatq2s1] (hatqNs1) {$\hat q^N_{s'_1}$};
    \node[mystate,right=42.5mm of hatq2sm] (hatqNsm) {$\hat q^N_{s'_m}$};

    \node[mystate,right=15mm of hatqNs1] (qends1) {$q^{\indexend}_{s'_1}$};
    \node[mystate,right=15mm of hatqNsm] (qendsm) {$q^{\indexend}_{s'_m}$};

    \node[mystate,accepting,below right=5mm and 20mm of qends1]
    (qend) {$q^{\indexend}$};

    \path
    (q0)++(-1,0) edge (q0)

    (q0) edge [loop above] node[above=-1mm]{$*$} (q0)
    (q1s1) edge [loop above] node[above=-1mm]{$*$} (q1s1)
    (q1sm) edge [loop above] node[above=-1mm]{$*$} (q1sm)
    (q1s1s1) edge [loop above] node[right]{$s \notin S_{check_1}$} (q1s1s1)
    (q1s1sp) edge [loop right] node[below right=0mm and -3mm]
    {$s \notin S_{check_1}$} (q1s1sp)
    (q1sms1) edge [loop above] node[right]{$s \notin S_{check_1}$} (q1sms1)
    (q1smsp) edge [loop right] node[below right=0mm and -3mm]
    {$s \notin S_{check_1}$} (q1smsp)
    (q2s1) edge [loop above] node[above=-1mm]{$*$} (q2s1)
    (q2sm) edge [loop above] node[above=-1mm]{$*$} (q2sm)
    (qNs1) edge [loop above] node[above=-1mm]{$*$} (qNs1)
    (qNsm) edge [loop above] node[above=-1mm]{$*$} (qNsm)
    (qNs1s1) edge [loop above] node[right]{$s \notin S_{check_N}$} (qNs1s1)
    (qNs1sp) edge [loop right] node[below right=0mm and -3mm]
    {$s \notin S_{check_N}$} (qNs1sp)
    (qNsms1) edge [loop above] node[right]{$s \notin S_{check_N}$} (qNsms1)
    (qNsmsp) edge [loop right] node[below right=0mm and -3mm]
    {$s \notin S_{check_N}$} (qNsmsp)
    (hatq1s1a) edge [loop above] node[above=-1mm]{$*$} (hatq1s1a)
    (hatq1sma) edge [loop above] node[above=-1mm]{$*$} (hatq1sma)
    (hatq1s1) edge [loop above] node[above=-1mm]{$*$} (hatq1s1)
    (hatq1sm) edge [loop above] node[above=-1mm]{$*$} (hatq1sm)
    (hatq2s1) edge [loop above] node[above=-1mm]{$*$} (hatq2s1)
    (hatq2sm) edge [loop above] node[above=-1mm]{$*$} (hatq2sm)
    (hatqNs1) edge [loop above] node[above=-1mm]{$*$} (hatqNs1)
    (hatqNsm) edge [loop above] node[above=-1mm]{$*$} (hatqNsm)
    (qends1) edge [loop above] node[above=-1mm]{$s \in S \setminus \{ s'_1 \}$}
    (qends1)
    (qendsm) edge [loop above] node[above=-1mm]{$s \in S \setminus \{ s'_m \}$}
    (qendsm)

    (q0) edge node[above,font=\footnotesize]{$s'_1$} (q1s1)
    (q0) edge node[above,font=\footnotesize]{$s'_m$} (q1sm)

    (q1s1) edge node[above,font=\footnotesize]{$s^1_{1}$} (q1s1s1)
    (q1s1) edge node[below,font=\footnotesize]{$s^1_{p_1}$} (q1s1sp)
    (q1sm) edge node[above,font=\footnotesize]{$s^1_{1}$} (q1sms1)
    (q1sm) edge node[below,font=\footnotesize]{$s^1_{p_1}$} (q1smsp)

    (q1s1s1) edge node[above,font=\footnotesize]{$s^1_{1}$} (q2s1)
    (q1s1sp) edge node[above,font=\footnotesize]{$s^1_{p_1}$} (q2s1)
    (q1sms1) edge node[above,font=\footnotesize]{$s^1_{1}$} (q2sm)
    (q1smsp) edge node[above,font=\footnotesize]{$s^1_{p_1}$} (q2sm)

    (q2s1) edge node[above,font=\footnotesize]{$s^2_{1}$} ++(1,1)
    (q2s1) edge node[below,font=\footnotesize]{$s^2_{p_2}$} ++(1,-1)
    (q2sm) edge node[above,font=\footnotesize]{$s^2_{1}$} ++(1,1)
    (q2sm) edge node[below,font=\footnotesize]{$s^2_{p_2}$} ++(1,-1)

    (qNs1) ++(-1,1)  edge node[above=2mm,font=\footnotesize]{$s^{N-1}_{1}$}
    (qNs1)
    (qNs1) ++(-1,-1) edge node[below=2mm,font=\footnotesize]{$s^{N-1}_{p_{N-1}}$}
    (qNs1)
    (qNsm) ++(-1,1)  edge node[above=2mm,font=\footnotesize]{$s^{N-1}_{1}$}
    (qNsm)
    (qNsm) ++(-1,-1) edge node[below=2mm,font=\footnotesize]{$s^{N-1}_{p_{N-1}}$}
    (qNsm)

    (qNs1) edge node[above,font=\footnotesize]{$s^N_{1}$} (qNs1s1)
    (qNs1) edge node[below,font=\footnotesize]{$s^N_{p_N}$}  (qNs1sp)
    (qNsm) edge node[above,font=\footnotesize]{$s^N_{1}$} (qNsms1)
    (qNsm) edge node[below,font=\footnotesize]{$s^N_{p_N}$}  (qNsmsp)

    (qNs1s1) edge node[above,font=\footnotesize]{$s^N_{1}$} (hatq1s1a)
    (qNs1sp) edge node[above,font=\footnotesize]{$s^N_{p_N}$}  (hatq1s1a)
    (qNsms1) edge node[above,font=\footnotesize]{$s^N_{1}$} (hatq1sma)
    (qNsmsp) edge node[above,font=\footnotesize]{$s^N_{p_N}$}  (hatq1sma)

    (hatq1s1a) edge[dashed,gray] (star1a)
    (hatq1sma) edge[dashed,gray] (starma)

     (star1) edge[dashed,gray] (hatq1s1)
     (starm) edge[dashed,gray] (hatq1sm)

     (hatq1s1) edge node[above,font=\footnotesize]{$s \in S_{check_1}$} (hatq2s1)
     (hatq1sm) edge node[above,font=\footnotesize]{$s \in S_{check_1}$} (hatq2sm)

     (hatq2s1) edge node[above,font=\footnotesize]{$s \in S_{check_2}$} ++(2,0)
     (hatq2sm) edge node[above,font=\footnotesize]{$s \in S_{check_2}$} ++(2,0)

     (hatqNs1) ++(-2,0) edge node[above,font=\footnotesize]
     {$s \in S_{check_{N-1}}$} (hatqNs1)
     (hatqNsm) ++(-2,0) edge node[above,font=\footnotesize]
     {$s \in S_{check_{N-1}}$} (hatqNsm)

     (hatqNs1) edge node[above,font=\footnotesize]{$s \in S_{check_N}$} (qends1)
     (hatqNsm) edge node[above,font=\footnotesize]{$s \in S_{check_N}$} (qendsm)

     (qends1) edge node[above,font=\footnotesize]{$s'_1$} (qend)
     (qendsm) edge node[below,font=\footnotesize]{$s'_m$} (qend);

     \node[below right=2mm and 7mm of q0,rotate=90]{\ldots\ldots};
     \node[below right=7.5mm and 17mm of q0,rotate=90]
     {\ldots\ldots\ldots\ldots\ldots};
     \node[right] at (3.5,0.25) {
       \ldots\ldots\ldots\ldots\ldots\ldots\ldots\ldots\ldots\ldots\ldots\ldots
       \ldots\ldots\ldots\ldots\ldots\ldots\ldots\ldots\ldots\ldots\ldots\ldots
     };

     \node[right] at (3,-7.15) {
       \ldots\ldots\ldots\ldots\ldots\ldots\ldots\ldots\ldots\ldots\ldots\ldots
       \ldots\ldots\ldots\ldots\ldots\ldots\ldots\ldots\ldots
     };

     \node[above right=4.5mm and -2mm of q1s1sp,rotate=90]{\ldots};
     \node[above right=4.5mm and -11.5mm of q1s1sp,rotate=90]{\ldots};

     \node[above right=4.5mm and -2mm of qNs1sp,rotate=90]{\ldots};
     \node[above right=4.5mm and -11.5mm of qNs1sp,rotate=90]{\ldots};

     \node[above right=4.5mm and -2mm of q1smsp,rotate=90]{\ldots};
     \node[above right=4.5mm and -11.5mm of q1smsp,rotate=90]{\ldots};

     \node[above right=4.5mm and -2mm of qNsmsp,rotate=90]{\ldots};
     \node[above right=4.5mm and -11.5mm of qNsmsp,rotate=90]{\ldots};

     \node[above right=5mm and 8mm of q2s1]{\ldots};
     \node[above right=-15mm and 8mm of q2s1]{\ldots};
     \node[above right=-7mm and 6mm of q2s1,rotate=90]{\ldots};
     \node[above right=-7mm and 12.5mm of q2s1,rotate=90]{\ldots};

     \node[above right=5mm and 8mm of q2sm]{\ldots};
     \node[above right=-15mm and 8mm of q2sm]{\ldots};
     \node[above right=-7mm and 6mm of q2sm,rotate=90]{\ldots};
     \node[above right=-7mm and 12.5mm of q2sm,rotate=90]{\ldots};

     \node[right=18mm of hatq2s1]{\ldots};
     \node[right=18mm of hatq2sm]{\ldots};

  \end{tikzpicture}
}%end scalebox

  \caption{A graphical account of the automaton $\cN_1$: $S_{\noneps} = \{ s'_1,
    s'_2, \ldots, s'_m \}$, $S_{inc_k} = \{ s^k_1, s^k_2, \ldots, s^k_{p_k} \}$ ($k
    \in \{ 1, \ldots, N\}$).}
  \label{fig:nfa}
\end{figure*}

In what follows, without loss of generality, we restrict our attention to
accepting witnesses for which the set of indexes required by item 3
of Definition~\ref{def:winningwitness} is ordered.
More precisely (we borrow the notation from
Definition~\ref{def:winningwitness}), we assume that there are $N$ indexes $c_1
< \ldots <c_N$, with $e_N < c_1$ and $c_N < \indexend$, such that $s_{c_k}$ is a
$check_k$ state, for each $k \in \{ 1, \ldots, N \}$ (this requirement
strengthens the one imposed by item 3 of Definition~\ref{def:winningwitness}).
Given a \cca $\cA$, it is easy to check that $\PrefixesA$ contains an accepting
witness, as specified by Definition~\ref{def:winningwitness}, if and
only if it contains one satisfying the additional ordering property above.
Thus, Lemma~\ref{lemma:winning} holds with respect to the new definition of
accepting witness as well.

Given a \cca $\cA$, we apply the following steps to build an NFA $\cN$ such that
$\languagenfaN = \languagewitnessA$:
\begin{inparaenum}[$(i)$]
\item we build an NFA $\cN_1$ accepting finite words over the set of states of
  $\cA$ that are potential accepting witnesses, i.e., they satisfy
  conditions~\ref{item:witness-nonepsilon}-~\ref{item:witness-check} of
  Definition~\ref{def:winningwitness} but they might not be prefix computations;
%
  % for a \cca (possibly different from
%  $\mathcal A$) over the same alphabet as $\cA$;
%
  in other words, such an automaton
  might as well accept words not belonging to $\PrefixesA$;
%
%, e.g., words that are
%  accepting witnesses for another \cca over the same alphabet as $\mathcal A$;
\item since $\PrefixesA$ is a regular language, thanks to closure properties
  of NFA, there exists an NFA $\cN$ whose language is
  $\languagenfa{\cN_1} \cap \PrefixesA = \languagewitnessA$.
\end{inparaenum}

Let $\cA = (S, \Si, s_0, N, \De)$ be a \cca.
We define $\cN_1 = \allowbreak \langle Q, \Si_{\cN_1}, \de, q_0, F \rangle$ as follows.
We set $\Si_{\cN_1} = S$, $F = \{ q^{\indexend} \}$; moreover, let $S_{\noneps}$
be the set of states of $S$ from which a non-$\epsilon$-transition can be fired,
and $S_{inc_k}$ be the sets of $inc_k$ states in $S$ ($k \in \{ 1, \ldots, N
\}$), we set $ Q = \{ q_0, q^{\indexend} \} \cup \{q^{\indexend}_{s'} \mid s'
\in S_{\noneps} \} \cup \bigcup_{k=1}^N \{q^k_{s'}, \hat q^k_{s'} \mid s' \in
S_{\noneps} \} \cup \bigcup_{k=1}^N \{q^k_{s's''} \mid s' \in S_{\noneps}, s'' \in
S_{inc_k} \}.$
%
%{\centering
%  $\begin{array}{l@{\hspace{0mm}}l}
%     Q = \{ q_0, q^{\indexend} \}
%     & \cup \{q^{\indexend}_{s'} \mid s' \in S_{sym} \} \cup \\
%     & \cup \bigcup_{k=1}^N \{q^k_{s'}, \hat q^k_{s'} \mid s' \in S_{sym} \} \cup \\
%     & \cup \bigcup_{k=1}^N \{q^k_{s's''} \mid s' \in S_{sym}, s'' \in
%       S_{inc_k} \}.
%   \end{array}$
%
%}
%
The transition relation $\de$ is described in Figure~\ref{fig:nfa}.
In particular, the automaton behaves as follows:
\begin{compactenum}
\item \label{a} it nondeterministically guesses index $\indexbegin$ when a
  symbol $s' \in S_{\noneps}$ is read; the next state
  $q^1_{s'}$ reached by $\cN_1$ stores the information about the state $s'$ of
  $\cA$ being read to check, at a later stage (when index $\indexend$ is
  guessed), that $s_{\indexbegin} = s' = s_{\indexend}$;

\item\label{b} similarly, for each $k \in \{ 1, \ldots, N \}$, it
  nondeterministically guesses indexes $b_k$ and $e_k$, when a symbol $s^k$
  corresponding to an $inc_k$ state (of $\cA$) is read; once again, the
  information about the state $s^k$ of $\cA$ being read is stored in the next
  state $q^k_{s's^k}$ reached by $\cN_1$, in order to check that the same state
  $s^k$ is read when $e_k$ is guessed ($s_{b_k} = s^k = s_{e_k}$); moreover, the
  automaton forces the absence of $check_k$ state in between indexes $b_k$ and
  $e_k$;
\item\label{c} it checks for the existence, after $e_N$, of $check_k$ states ($k
  \in \{ 1, \ldots, N \}$) in the desired order;
\item wait for the input symbol $s'$, that is, the same symbol read when
  $\indexbegin$ was guessed; when such a symbol is read, $\cN_1$ enters the
  final state $q^{\indexend}$.
\end{compactenum}

\noindent
Let $S_{\noneps}$ and $S_{inc_k}$ ($k \in \{1, \ldots, N \}$) be defined as
above and, in addition, let $S_{check_k}$ be the set of $check_k$ states in $S$.
We formally define $\de$ as follows:

\noindent
$\begin{array}{@{\hspace{0mm}}l@{\hspace{.4mm}}r@{\hspace{.4mm}}l@{\hspace{0mm}}}
   \de & = & \multicolumn{1}{@{\hspace{.4mm}}l@{\hspace{.5mm}}}{
           \{ (q_0, s, q_0) \mid s \in S \}
           \cup \{ (q_0, s', q^1_{s'}) \mid s' \in S_{\noneps} \} \cup \bigcup_{k=1}^N \{ (q^k_{s'}, s, q^k_{s'}) \mid s' \in S_{\noneps},
           s \in S \}
           } \\
%         & \multicolumn{2}{@{\hspace{1mm}}l}{
%           \cup \bigcup_{k=1}^N \{ (q^k_{s'}, s, q^k_{s'}) \mid s' \in S_{\noneps},
%           s \in S \} \cup
%           } \\
%
         & \cup & \multicolumn{1}{@{\hspace{.4mm}}l@{\hspace{.5mm}}}{
           \bigcup_{k=1}^N \{ (q^k_{s'}, s^k, q^k_{s's^k}) \hspace{-.5mm} \mid \hspace{-.5mm} s' \in S_{\noneps}, s^k \in S_{inc_k} \} \hspace{-.3mm} \cup \bigcup_{k=1}^N \{ (q^k_{s's^k}, s, q^k_{s's^k}) \hspace{-.5mm} \mid \hspace{-.5mm} s' \in S_{\noneps}, s^k \in S_{inc_k}, s \in S \setminus S_{check_k} \}
%
%s \in S \setminus S_{check_k} \} \cup
           } \\
%         & \cup \bigcup_{k=1}^N \{ (q^k_{s's^k}, s, q^k_{s's^k}) \mid
%         & s' \in S_{\noneps}, s^k \in S_{inc_k}, \\
         & \cup & \multicolumn{1}{@{\hspace{.4mm}}l@{\hspace{.5mm}}}{
            \bigcup_{k=1}^{N-1} \{ (q^k_{s's^k}, s^k, q^{k+1}_{s'}) \mid
           s' \in S_{\noneps}, s^k \in S_{inc_k} \} \cup \{ (q^N_{s's^N}, s^N, \hat q^{1}_{s'}) \mid s' \in S_{\noneps},
           s^N \in S_{inc_N} \} } \\
%         & \multicolumn{2}{@{\hspace{0mm}}l}{
%           \cup \bigcup_{k=1}^{N-1} \{ (q^k_{s's^k}, s^k, q^{k+1}_{s'}) \mid
%           s' \in S_{\noneps}, s^k \in S_{inc_k} \} \cup
%           } \\
%         & \multicolumn{2}{@{\hspace{0mm}}l}{
%           \cup \bigcup_{k=1}^N \{ (q^k_{s'}, s^k, q^k_{s's^k}) \mid s' \in S_{\noneps},
%           s^k \in S_{inc_k} \} \cup \bigcup_{k=1}^N \{ (q^k_{s's^k}, s, q^k_{s's^k}) \mid
%           s' \in S_{\noneps}, s^k \in S_{inc_k},  s \in S \setminus S_{check_k} \} \cup }\\
%%         & \cup \bigcup_{k=1}^N \{ (q^k_{s's^k}, s, q^k_{s's^k}) \mid
%%         & s' \in S_{\noneps}, s^k \in S_{inc_k}, \\
%         & &
%         & \multicolumn{2}{@{\hspace{0mm}}l}{
%           \cup \bigcup_{k=1}^{N-1} \{ (q^k_{s's^k}, s^k, q^{k+1}_{s'}) \mid
%           s' \in S_{\noneps}, s^k \in S_{inc_k} \} \cup
%           }} \\
         & \cup & \multicolumn{1}{@{\hspace{.4mm}}l@{\hspace{.5mm}}}{
            \bigcup_{k=1}^{N} \{ (\hat q^k_{s'}, s, \hat q^{k}_{s'}) \mid
           s' \in S_{\noneps}, s \in S \} \cup \bigcup_{k=1}^{N-1} \{ (\hat q^k_{s'}, s^k, \hat q^{k+1}_{s'}) \mid
           s' \in S_{\noneps}, s^k \in S_{check_k} \}
           } \\
%         & \multicolumn{2}{@{\hspace{0mm}}l}{
%           \cup \bigcup_{k=1}^{N} \{ (\hat q^k_{s'}, s, \hat q^{k}_{s'}) \mid
%           s' \in S_{\noneps}, s \in S \} \cup
%           } \\
         & \cup & \multicolumn{1}{@{\hspace{.4mm}}l@{\hspace{.5mm}}}{
             \{ (\hat q^N_{s'}, s^N, q^{\indexend}_{s'}) \mid
           s' \in S_{\noneps}, s^N \in S_{check_N} \} \cup \{ (q^{\indexend}_{s'}, s, q^{\indexend}_{s'}) \mid
           s' \in S_{\noneps}, s \in S \setminus \{ s' \} \}
           } \\
%         & \multicolumn{2}{@{\hspace{0mm}}l}{
%           \cup \{ (\hat q^N_{s'}, s^N, q^{\indexend}_{s'}) \mid
%           s' \in S_{\noneps}, s^N \in S_{check_N} \} \cup
%           } \\
         & \cup & \multicolumn{1}{@{\hspace{.4mm}}l@{\hspace{.5mm}}}{
           \{ (q^{\indexend}_{s'}, s', q^{\indexend}) \mid s' \in S_{\noneps} \}
           }
%           } \\
%         & \multicolumn{2}{@{\hspace{0mm}}l}{
%           \cup \{ (q^{\indexend}_{s'}, s', q^{\indexend}) \mid s' \in S_{\noneps} \}.
%           }
\end{array}
$

Since the size of $\cN_1$ is polynomial in the size of $\cA$ ($|Q| \leq 2 + 2
\cdot N \cdot |S| + N \cdot |S|^2 + |S|$), we have a polynomial reduction from
the emptiness problem for \cca to the one for NFA.
%, and thus the following result holds.

\begin{theorem}
  The emptiness problem for \cca is decidable in PTIME.
\end{theorem}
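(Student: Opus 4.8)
The plan is to assemble the constructions of this section into a polynomial-time reduction from the \cca emptiness problem to the NFA emptiness problem, the latter being solvable in PTIME. Throughout, I would work with simple \cca, which is harmless: every \cca $\cA$ can be turned into a language-equivalent simple one by replacing each state carrying several distinct actions with a small cascade of $\epsilon$-transition \emph{choice} states, an operation that adds only linearly many states and transitions and is thus computable in polynomial time.

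First, by Lemma~\ref{lemma:winning} (in its strengthened, ordered-check form) we have $\cL(\cA) \neq \emptyset$ if and only if $\PrefixesA$ contains an accepting witness, that is, if and only if $\languagewitnessA \neq \emptyset$. So it suffices to decide emptiness of $\languagewitnessA$ in polynomial time.

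Second, I would invoke the automaton $\cN_1$ built above (Figure~\ref{fig:nfa}): it is an NFA over the alphabet $S$ accepting exactly those state sequences that satisfy conditions~\ref{item:witness-nonepsilon}--\ref{item:witness-check} of Definition~\ref{def:winningwitness}, i.e., the \emph{potential} accepting witnesses, and it has polynomially many states ($|Q| \leq 2 + 2N|S| + N|S|^2 + |S|$). It remains to intersect its language with $\PrefixesA$ so as to retain only the sequences that are genuine prefix computations. Here I would exploit the determinacy property established earlier for simple \cca: since the counter vector reached after reading a state sequence is uniquely determined by that sequence, membership in $\PrefixesA$ depends on states alone, so $\PrefixesA$ is recognized by an NFA (indeed a DFA) $\cD$ of size $O(|S|)$, which reads symbol $s'$ from state $s$ and moves to $s'$ exactly when $\cA$ has a transition from $s$ to $s'$. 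By the standard product construction there is an NFA $\cN$ with $\languagenfa{\cN} = \languagenfa{\cN_1} \cap \PrefixesA = \languagewitnessA$ and $|\cN| = O(|\cN_1| \cdot |S|)$, hence polynomial in $|\cA|$.

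Finally, emptiness of an NFA is decidable in time linear in its size, by searching for a path from the initial state to a final state; applying this to $\cN$ decides whether $\languagewitnessA = \emptyset$, equivalently whether $\cL(\cA) = \emptyset$, in polynomial time. The only point that requires care --- and the main obstacle --- is ensuring that \emph{every} link of the chain stays polynomial: in particular that the general-to-simple transformation incurs only a polynomial blow-up, and that $\PrefixesA$ admits the small automaton $\cD$. The latter rests precisely on the simple-\cca determinacy property, which guarantees that prefix computations can be characterised by their state sequences alone, so that no dependence on counter values (which range over an unbounded domain) enters the construction.
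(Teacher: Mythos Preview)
Your proposal is correct and follows essentially the same three-step approach as the paper (simplify, reduce to witnesses via Lemma~\ref{lemma:winning}, then build $\cN_1$ and intersect with $\PrefixesA$). You even make explicit a detail the paper glosses over, namely the polynomial-size automaton $\cD$ for $\PrefixesA$, which the paper only asserts to be regular without bounding its size.
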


%%% Local Variables:
%%% mode: latex
%%% TeX-master: "main"
%%% End:

\section{From \texorpdfstring{$\omega T$}{omega T}-regular languages to \texorpdfstring{\cca}{CCA}}
\label{sec:encoding}

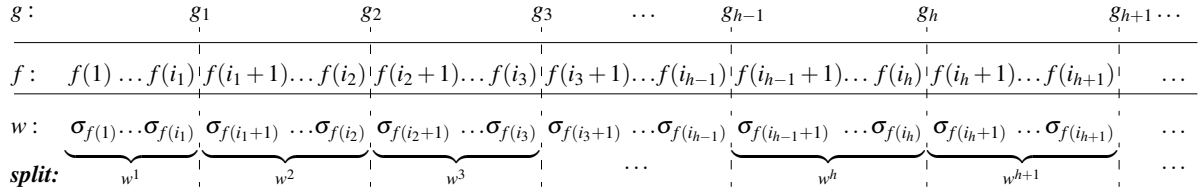
\begin{figure*}[t]
  \centering

\scalebox{.9}{%
  {\small $\begin{array}{@{\hspace{-.5mm}}l@{\hspace{-2mm}}l@{\hspace{0mm}}c@{\hspace{0mm}}r@{\hspace{0.5mm}}:@{\hspace{0mm}}l@{\hspace{0mm}}c@{\hspace{0mm}}r@{\hspace{0.5mm}}:@{\hspace{0mm}}l@{\hspace{0mm}}c@{\hspace{0mm}}r@{\hspace{0.5mm}}:@{\hspace{0mm}}l@{\hspace{0mm}}c@{\hspace{0mm}}r@{\hspace{0.5mm}}:@{\hspace{0mm}}l@{\hspace{0mm}}c@{\hspace{0mm}}r@{\hspace{0.5mm}}:@{\hspace{0mm}}l@{\hspace{0mm}}c@{\hspace{0mm}}r@{\hspace{0.5mm}}:@{\hspace{0mm}}c@{\hspace{-2.5mm}}}
     \boldsymbol{g:}
     & \multicolumn{3}{@{\hspace{0mm}}l@{\hspace{1mm}}}{}
     & \multicolumn{3}{@{\hspace{0mm}}l@{\hspace{1mm}}}{\hspace{-1.5mm}g_1}
     & \multicolumn{3}{@{\hspace{0mm}}l@{\hspace{1mm}}}{\hspace{-1.5mm}g_2}
     & \multicolumn{1}{@{\hspace{0mm}}l@{\hspace{1mm}}}{\hspace{-1.5mm}g_3}
     & \ldots
     & \multicolumn{1}{@{\hspace{0mm}}l@{\hspace{1mm}}}{}
     & \multicolumn{3}{@{\hspace{0mm}}l@{\hspace{1mm}}}{\hspace{-1.5mm}g_{h-1}}
     & \multicolumn{3}{@{\hspace{0mm}}l@{\hspace{1mm}}}{\hspace{-1.5mm}g_h}
     & \multicolumn{1}{@{\hspace{0mm}}l@{\hspace{1mm}}}{\hspace{-1.5mm}g_{h+1}
       \ldots} \\ [.25mm]
     &&&&&&&&&&&&&&&&&&& \\ [-2mm]
     \arrayrulecolor{black}\hline\arrayrulecolor{black}
     &&&&&&&&&&&&&&&&&&& \\ [-1.5mm]
     \boldsymbol{f:}
     & \hspace{.5mm} f(1) & \ldots & f(i_1)
     & \hspace{.5mm} f(i_1 + 1) & \ldots & f(i_2)
     & \hspace{.5mm} f(i_2 + 1) & \ldots & f(i_3)
     & \hspace{.5mm} f(i_3 + 1) & \ldots & f(i_{h-1})
     & \hspace{.5mm} f(i_{h-1} + 1) & \ldots & f(i_{h})
     & \hspace{.5mm} f(i_{h} + 1) & \ldots & f(i_{h+1})
     & \hspace{2mm} \ldots \\ [.5mm]
     \arrayrulecolor{black}\hline\arrayrulecolor{black}
     &&&&&&&&&&&&&&&&&&& \\ [-1.5mm]
     \boldsymbol{w:}
     & \hspace{.75mm} \sigma_{f(1)} & \ldots & \sigma_{f(i_1)}
     & \hspace{.75mm} \sigma_{f(i_1 + 1)} & \ldots & \sigma_{f(i_2)}
     & \hspace{.75mm} \sigma_{f(i_2 + 1)} & \ldots & \sigma_{f(i_3)}
     & \hspace{.75mm} \sigma_{f(i_3+1)}  & \ldots & \sigma_{f(i_{h-1})}
     & \hspace{.75mm} \sigma_{f(i_{h-1} + 1)} & \ldots & \sigma_{f(i_{h})}
     & \hspace{.75mm} \sigma_{f(i_{h} + 1)} & \ldots & \sigma_{f(i_{h+1})}
     & \hspace{2mm} \ldots \\ [-4mm]
%     \arrayrulecolor{black}\hline\arrayrulecolor{black}
%     &&&&&&&&&&&&&&&&&&& \\ [-1.5mm]
%      \fbox{\parbox{10mm}{\vspace{4mm} \mathit{\boldsymbol{split:}}}}
%     \parbox{10mm}{\vspace{6mm} \textit{\textbf{split:}}}
     \parbox[t][5mm][b]{10mm}{\textit{\textbf{split:}}}
     & \multicolumn{3}{@{\hspace{0mm}}r:@{\hspace{-1.5mm}}}
       {\underbrace{\hspace{19.5mm}}_{w^1}}
     & \multicolumn{3}{@{\hspace{0mm}}r:@{\hspace{-1.5mm}}}
       {\underbrace{\hspace{24.5mm}}_{w^2}}
     & \multicolumn{3}{@{\hspace{0mm}}r:@{\hspace{-1.5mm}}}
       {\underbrace{\hspace{24.5mm}}_{w^3}}
     & \multicolumn{3}{c:}{\parbox{14mm}{\centering \vspace{7.5mm} $\ldots$}}
     & \multicolumn{3}{@{\hspace{0mm}}r:@{\hspace{-1.5mm}}}
       {\underbrace{\hspace{28.5mm}}_{w^{h}}}
     & \multicolumn{3}{@{\hspace{0mm}}r:@{\hspace{-1.5mm}}}
       {\underbrace{\hspace{28mm}}_{w^{h+1}}}
%    & \multicolumn{1}{@{\hspace{0mm}}r:@{\hspace{-1.5mm}}}{\ldots}
     & \parbox{14mm}{\centering \vspace{7.5mm} \hspace{2mm}$\ldots$}
   \end{array}$}
}%end scalebox

  \begin{tikzpicture}

  \end{tikzpicture}
  \caption{An infinite word $w = w[1]w[2] \ldots w[i] \ldots =
    \sigma_{f(1)}\sigma_{f(2)} \ldots \allowbreak \sigma_{f(i)} \ldots$ is split
    using sequence $g_1 < g_2 < \ldots < g_h < \ldots$ into infinitely many
    finite words $w^1$, $w^2$, \ldots, $w^h$, \ldots.}
  \label{fig:word-splitting}
\end{figure*}

In this section, we show how to map an \emph{$\omega T$-regular expression} $E$
into a corresponding  \cca $\cA$ such that  $\cL(E)= \cL(\cA)$.
We build the automaton $\cA$ in a compositional way: for each sub-expression $E'$ of $E$,
starting from the atomic ones, we introduce a set $\mathcal S_{E'}$ of \ccas and
then we show how to produce the set of automata for complex sub-expressions by
suitably combining automata in the sets associated with their sub-expressions.
%according to the semantics of the operators of $\omega T$-regular expressions.
Eventually, we obtain a set of automata for the $\omega T$-regular expression $E$.
The automaton $\cA$ results from the merge of the automata in such a set,
as described below.
%W.l.o.g., we assume that the set of states of any automaton $\cA'$
%generated in such a construction is disjoint from the set
%of states of any other automaton $\cA''$ that comes into play in
%the construction (being $\cA''$ an element of the set $\cA'$ belongs
%to or of another one).
Without loss of generality, we assume the sets of states of all automata generated in the
construction to be pairwise disjoint, i.e., if $\cA' \in \mathcal S_{E'}$
and $\cA'' \in \mathcal S_{E''}$, where $E'$ and $E''$ are two (not necessarily distinct)
sub-expressions of $E$, then the set of states of $\cA'$ and the one of $\cA''$
are disjoint.

%W.l.o.g., for each sub-expression $T'$ of $T$ and every pair
%of distinct automata $\cA= (S,\Sigma,s_0,N,\Delta), \cA'=
%(S',\Sigma,s_0',N',\Delta') \in \mathcal S_{T'}$, we assume that
%$S\cap S'=\emptyset$.
%We also assume that $S\cap S'=\emptyset$ holds for every pair of expressions
%$T_1, T_2$ and every pair of automata
%$(S,\Sigma,s_0,N,\Delta)\in \mathcal S_{T_1},
%(S',\Sigma,s_0',N',\Delta')\in \mathcal S_{T_2}$.
%The construction is by structural induction on $\omega T$-regular expressions.
%By inductive hypothesis, when building the set $\mathcal S_{T'}$ of $CQ$
%automata for a sub-expression $T'$ of $T$, we will assume that we have
%the set $\mathcal S_{T''}$ for each sub-expression $T''$ of $T'$.
%Moreover, by construction, all the $CQ$ automata
%$\cA= (S,\Sigma,s_0,N,\Delta)$ we build feature a distinguished
%\emph{final state}, denoted by $s_f\in S$, such that
%$(s, \sigma, s', (k, op))\in \Delta$ implies $ s \neq s_f$.

We proceed by structural induction on $\omega T$-regular expressions,
that is, when building the set $\mathcal S_{E'}$ of \ccas for a
sub-expression $E'$ of $E$, we assume the sets of \ccas for
the sub-expressions of $E'$ to be available.
In addition, by construction, we force each generated \cca $\cA=
(S,\Sigma,s_0,N,\Delta)$ to feature a distinguished \emph{final state} $s_f$
such that $(s_f, \sigma, s', (k, op))\in \Delta$ implies $\sigma = \epsilon$,
$s' = s_f$, $k=1$, and $op = inc$; in order to distinguish the final state of a
\cca we sometimes abuse the notation and write $\cA =
(S,\Sigma,s_0,s_f,N,\Delta)$, where $s_f$ is the final state of $\cA$.

%\subsection{Encoding of \texorpdfstring{$T$}{T}-regular expressions}

%\paragraph{Encoding of \texorpdfstring{$T$}{T}-regular expressions.}
\noindent{\bf Encoding of \texorpdfstring{$T$}{T}-regular expressions.}
We first deal with $T$-regular expressions (sub-grammar rooted in $e$ in paragraph ``$\omega T$-regular languages'' at page~\pageref{page:Tregexp-grammar}).
Since a $T$-regular expression produces a language of word sequences and our automata accept $\omega$-words,
we must find a way to extract sequences from $\omega$-words.
Intuitively, we do that by splitting an infinite word into infinitely many
finite sub-words, each of them corresponding to the sequence of symbols
in between two consecutive \emph{check} of the 1st counter along the
corresponding accepting run.
Formally, let $\pi=(s_0,\cvectorinitial)(s_1, \cvectorvsub{1})\ldots$ be an
accepting run of some $\omega$-word $w$ on some \cca $\cA$ such that $(s_i,
\cvectorvsub{i}) \rightarrow^{\sigma_i}_{\cal A} (s_{i+1}, \cvectorvsub{i+1})$
via $\delta_i$, for each $i \geq 0$, and let $f$ be the trace of $w$ in $\pi$
wrt.~$\cA$ (see definition of run at page~\pageref{txt:run-definition}).
Recall that $f$ is such that $\sigma_{f(i)} = w[i]$ for all $i\geq 1$ (roughly
speaking, $f$ enumerates symbols different from $\epsilon$ within sequence
$\sigma_0 \sigma_1 \ldots$).
Moreover, let $g_1 < g_2 < \ldots < g_h < \ldots$ ($g_h \in \mathbb N$ for every
$h$) be the sequence of indexes corresponding to transitions in $\pi$ where the
1st counter is checked, that is, for every $i\in \bbN$ we have that $\delta_i$
has the form $(s_i, \sigma_i, s_{i+1},(1,check))$ if and only if $i = g_h$ for
some $h$.
As shown in \figurename{}~\ref{fig:word-splitting},
the sequence $\langle g_h \rangle_{h \in \bbNp}$ defines a unique partition of
the infinite word $w = \sigma_{f(1)} \sigma_{f(2)} \sigma_{f(3)} \ldots$
into infinitely many finite sub-words (some of them are possibly empty words):
$w^1 = \sigma_{f(1)}\sigma_{f(2)} \ldots \sigma_{f(i_1)}$, $w^2 = \sigma_{f(i_1+1)} \sigma_{f(i_1+2)} \ldots \sigma_{f(i_2)}$, $w^3 = \sigma_{f(i_2+1)} \ldots \sigma_{f(i_3)}$, \ldots, $w^h = \sigma_{f(i_{h-1}+1)} \ldots \sigma_{f(i_h)}$, and so on,
with $f(i_h) < g_h \leq f(i_h+1)$ for every $h$.
We define the language of word sequences accepted by
$\cA$, denoted by $\cL_s(\cA)$, as $\cL_s(\cA)=\{ (w^1,w^2, \ldots, w^h, \ldots): w \in \cL(\cA) \}$.

%Automata for $T$-regular expressions are built as follows.
%$t  \ ::= \ \emptyset \ \mid \ a \ \mid \ t.t \ \mid \ t+t
%\ \mid \ t^* \ \mid \ t^T$.
%
%For each expression $t$, we build a set $\cS_t=\{ \cA_1,\ldots, \cA_n \}$, with
%$\cA_i= (S_i,\Sigma,s^i_0,s^i_f,N_i,\Delta_i)$ and $i \in \{ 1, \ldots, n
%\}$, such that $\cL(t)$ equals to
%
%\noindent
%$\bigcup\limits_{1 \leq i \leq n} \hspace{-2.5mm} \cL_s
%((S_i, \Sigma,s^i_0,s^i_f,N_i,\Delta_i\cup\{ (s^i_f,\epsilon, s^i_0, (1,check))\}))$.
%

Let $\widehat{\cA} = (S, \Sigma,s_0,s_f,N,\Delta\cup\{ (s_f,\epsilon, s_0,
(1,check))\})$, for every $\cA = (S, \Sigma,s_0,s_f,N,\Delta)$.
For each expression $e$, we build a set $\cS_e$ for which it holds:

{\centering
\noindent
$\cL(e) = \bigcup_{\cA \in \cS_e} \cL_s (\widehat{\cA})$.

}

\noindent{\em Base cases.} If $e=\emptyset$, then $\mathcal S_{e}=\{\cA_{\emptyset}\}$ where
$\cA_{\emptyset}=(\{s_0,s_f\}, \Sigma, s_0, s_f, 1, \emptyset)$.

\noindent
If $e=a$, then $\mathcal S_{e} = \{\cA_a\}$
where $\cA_a= (\{s_0,  s_f\}, \allowbreak \Sigma, s_0, s_f, 1, \{ (s_0, a, s_f, (1, no\_op)), (s_f,\epsilon, s_f, (1, inc))\})$.

\noindent
See \figurename{}~\ref{fig:translation}~(a) and~(b)  for a graphical account of both cases.

%To deal with the inductive cases, we can assume, by inductive hypothesis, that
%we have the set $\mathcal S_{t'}$ for each sub-expression $t'$ of $t$.

\smallskip

\noindent{\em Inductive step.}
For our purposes, we define, for every \cca $\cA= (S,\Sigma,s_0,N,\Delta)$ and
natural number $N' \geq 1$, the \emph{$N'$-shifted version} of $\cA$ as the
automaton $\cA'= (S,\Sigma, s_0,N+N',\{ (s,\sigma, s,(k+N',op)): (s,\sigma,
s,(k,op)) \in \Delta \})$.
% Notice that all the counter indexes  are shifted but the first one.
%
Four cases must be considered.

\begin{figure}[!b]
\vspace{-0.9cm}
\centering
\scalebox{.69}{% 
\begin{tikzpicture}
\node[label={[]180:(a)}](A){
\includegraphics{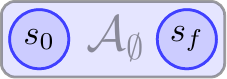}};

\pgftransformshift{\pgfpoint{0.8cm}{-1.5cm}}

\node[label={[]180:(b)}](B){\includegraphics{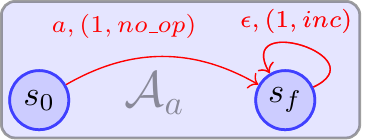}};

\pgftransformshift{\pgfpoint{1.9cm}{-1.5cm}}

\node[label={[yshift=-0.4cm]180:(c)}](C){\includegraphics{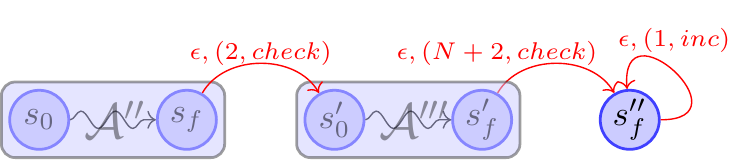}};
\draw (C)++(-3.95,.32) node[draw=gray!30,fill=gray!10,font=\footnotesize]{$\mathbf{e_1 \cdot e_2}$};

\pgftransformshift{\pgfpoint{11cm}{2.5cm}}

\node[label={[yshift=-0.2cm, xshift=-0.7cm]180:(d)}] (D) {\includegraphics[scale=1]{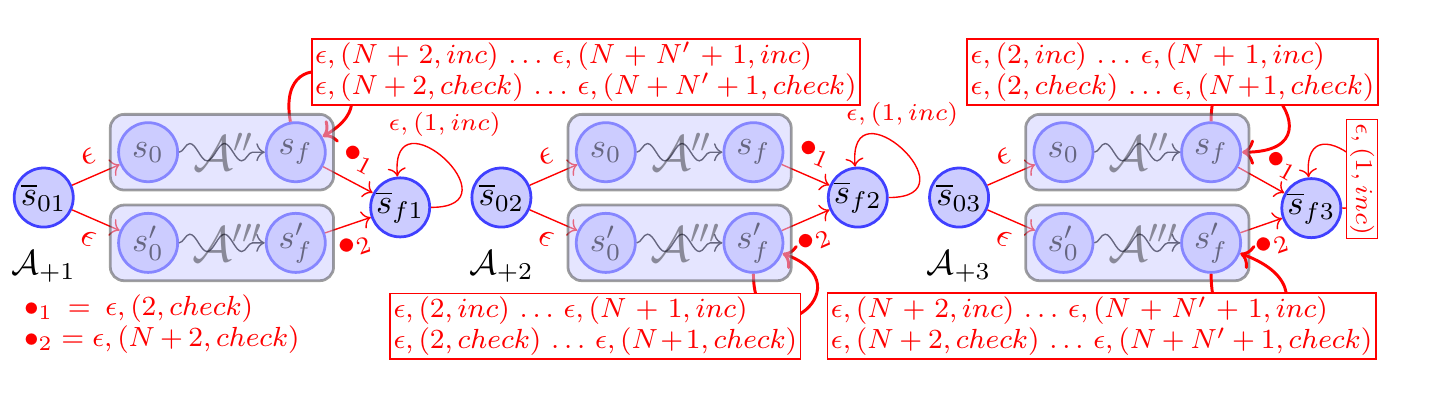}};
 \draw (D.west)node[above left=7mm and
    2mm,draw=gray!30,fill=gray!10,font=\footnotesize] {$\mathbf{e_1 + e_2}$};

\pgftransformshift{\pgfpoint{-3cm}{-2.7cm}}

\node[label={[yshift=-0.2cm]180:(e)}](E){
%\node[label={[label distance=-0.5cm]270:$t_1^*$}](A){
\includegraphics{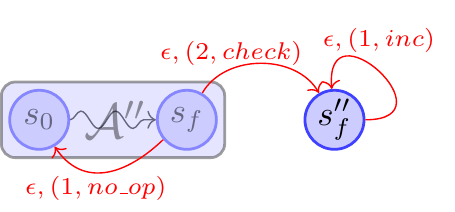}};
\draw (E)++(-2.85,.5)
node[draw=gray!30,fill=gray!10,font=\scriptsize]{$\mathbf{e_1^*}$};

\pgftransformshift{\pgfpoint{7cm}{-0cm}}

\node[label={[yshift=-0.2cm]180:(f)}](F){
%\node[xshift=4.25cm,label={[label distance=-0.5cm]270:$t_1^T$}](B){
\includegraphics{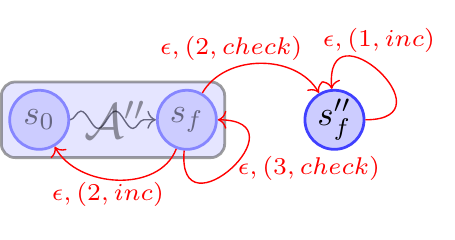}};
\draw (F)++(-2.85,.5)
node[draw=gray!30,fill=gray!10,font=\scriptsize]{$\mathbf{e_1^T}$};

\end{tikzpicture}
}%end scalebox
\vspace{-0.7cm}

\caption{\label{fig:translation}The automata for the translation of a $T$-regular expression $e$.}
\end{figure}

%\begin{figure}[!b]
%
%\centering
%\scalebox{.7}{%
%\begin{tikzpicture}
%\node(A){
%\includegraphics{emptytranslation}};
%\node[xshift=5cm](B){\includegraphics{symtranslation}};
%\node[xshift=3.2cm, yshift=-2cm](C){\includegraphics{concattranslation}};
%\node[xshift=13cm, yshift=2mm](C){\includegraphics{concattranslation}};
%\node[draw=gray!30,fill=gray!10,font=\footnotesize] at (-.75,-1.65) {$\mathbf{t_1 \cdot t_2}$};
%\draw (C)++(-3.95,.32) node[draw=gray!30,fill=gray!10,font=\footnotesize]{$\mathbf{e_1 \cdot e_2}$};
%
%\end{tikzpicture}
%}%end scalebox
%
%\caption{The automata $\cA_{\emptyset}$, $\cA_{a}$ for the base cases (left and middle pictures) and the automaton for the inductive step $e_1 \cdot e_2$ (right picture).}
%\label{fig:base}
%\end{figure}

%\begin{figure*}[t]
%
%\parbox[t][2cm]{\textwidth}{
%
%{\centering
%\includegraphics[scale=.96]{mixtranslation}
%
%}
%
%\vspace{-3.75cm}
%
%\hspace{7.5mm}
%\tikz\node[draw=gray!30,fill=gray!10,font=\footnotesize] {$\mathbf{t_1 + t_2}$};
%
%{\centering
%  \begin{tikzpicture}
%    \node (A) {\includegraphics[scale=.7]{mixtranslation}};
%
%    \draw (A.west)node[above left=7mm and
%    2mm,draw=gray!30,fill=gray!10,font=\footnotesize] {$\mathbf{e_1 + e_2}$};
%  \end{tikzpicture}
%
%}
%
%}
%
%\vspace{-5mm}
%
%\caption{The automata $\cA_{+_1}$, $\cA_{+_2}$, and $\cA_{+_3}$
%  (inductive step $e_1 +e_2$).}
%\label{fig:mix}
%\end{figure*}

\begin{compactitem}[\labelitemi\leftmargin=0.1em]
\item
  Let $e=e_1 \cdot e_2$, $\cA= (S,\Sigma,s_0,s_f,N,\Delta) \in \mathcal S_{e_1}$, and
$\cA'= (S', \Sigma,s_0', s'_f,N',\Delta') \in \mathcal S_{e_2}$. Moreover, let
$\cA''= (S,\Sigma,s_0,s_f,N+1,\Delta'')$ and $\cA'''= (S',\Sigma,s_0',s'_f,N'+N+1,\Delta''')$ be the
$1$-shifted version of $\cA$ and the $N+1$-shifted version of $\cA'$, respectively.
We define $\cA {\cdot} \cA'= (S\cup S'\cup \{s_f''\},\Sigma,s_0,s_f'',N+N'+1, \Delta''\cup \Delta'''
\cup \{(s_f,\epsilon, s'_0,(2,check)),(s_f',\epsilon, s''_f, (N+2,check)),(s_f'',\epsilon, s''_f,(1,inc)) \})$. \\
%
% where $s_f''$ is the final state of $\cA \cdot \cA'$. \\
%
We set $\mathcal S_{e_1 \cdot e_2} = \{ \cA {\cdot} \cA': \cA \in \mathcal
S_{e_1}, \cA'\in \mathcal S_{e_2}\}$.
See \figurename{}~\ref{fig:translation}~(c) for a graphical account.

\item
  Let $e=e_1 + e_2$,  $\cA= (S,\Sigma,s_0,s_f,N,\Delta) \in \mathcal S_{e_1}$, and $\cA'= (S',\Sigma,s_0',s'_f
N',\Delta') \in \mathcal S_{e_2}$. Moreover, let $\cA''$ and $\cA'''$ be defined as in the previous case.
%$\cA''= (S,\Sigma,s_0,N+1,\Delta'')$ and
%$\cA'''= (S',\Sigma,s_0',N'+N+1,\Delta''')$ be the
%$1$-shifted version of $\cA$ and the
%$N+1$-shifted version of $\cA'$, respectively.
%
%Let $A_1=(S'_1, F_1,\Sigma, s^1_0, \Delta'_1)$
%and  $A_2=(S'_2, F_2,\Sigma, s^2_0, \Delta'_2)$
%with $S_1\cap S_2=\emptyset$ the two NFA
%that recognize $\cL(t_1')$ and $\cL(t_2')$, respectively.
%For $i\in \{1,2\}$ we define the $CQ$ automata
%$\cA_i=(S_i = S'_i\cup\{ s^i_f\},\Sigma,s^i_0,1,
%\Delta_i = \{(s,\sigma, s', (1, no\_op)):
%(s,\sigma, s') \in \Delta'_i \}\cup \{(s,\sigma, s^i_f, (\epsilon, no\_op)):
%s\in F\})$.
%Given two automata $\cA= (S,\Sigma,s_0,C,\Delta) \in \mathcal S_{t_1}$ and
%$\cA'= (S',\Sigma,s_0',C',\Delta') \in \mathcal S_{t_2}$
%(we assume $S\cap S'=S\cap S_i= S'\cap S_i =\emptyset$ for $i \in \{1,2\}$),
We define $\cA{+}\cA'$ as the set $\{\cA_{+_1}, \cA_{+_2}, \cA_{+_3}\}$ (see \figurename~\ref{fig:translation}~(d)), where

\begin{compactitem}[\labelitemii\leftmargin=0.1em]
\item \sloppypar
  $\cA_{+_1}= (S\cup S'\cup\{ \os_{01}, \os_{f1}\}, \Sigma, \os_{01}, \os_{f1}, N'+N+1, \Delta''\cup
  \Delta'''
  \cup \{(\os_{01},\epsilon, s_0, (1, no\_op)), (\os_{01},\epsilon, s'_0,
(1,
\allowbreak
 no\_op)),
  (s_{f},\epsilon, \os_{f1}, (2,check)),(s'_{f},\epsilon,\os_{f1}, (N+2,
  check)),
  \allowbreak
  (\os_{f1},\epsilon, \os_{f1},(1, inc)) \}
  \cup \{ (s_f, \epsilon,
\allowbreak
 s_f, (k,*)): *\in \{inc, check\},N+2\leq k \leq
  N+N'+1\}),$
\item
  $\cA_{+_2}=(S\cup S'\cup\{ \os_{02}, \os_{f2}\}, \Sigma, \os_{02}, \os_{f2}, N'+N+1, \Delta''\cup
  \Delta'''
  \cup \{(\os_{02}, \epsilon, s_0,(1, no\_op)), (\os_{02},\epsilon, s'_0,
 (1,
\allowbreak
 no\_op)),
  (s_{f}, \epsilon, \os_{f2}, (2, check)),(s'_{f},\epsilon,\os_{f2}, (N +2,
  check)),
  \allowbreak
  (\os_{f2}, \epsilon, \os_{f2}, (1,inc)) \}
  \cup \{ (s'_f,\epsilon, s'_f, (k,*)): * \in \{inc, check\}, 2\leq k \leq
  N+1\})$, and
\item
  $\cA_{+_3}= (S\cup S'\cup\{ \os_{03}, \os_{f3}\}, \Sigma, \os_{03}, \os_{f3}, N'+N+1, \Delta''\cup
  \Delta'''
  \cup \{(\os_{03},\epsilon, s_0, (1, no\_op)), (\os_{03},\epsilon,
  s'_0,
 (1,
\allowbreak
 no\_op)),
  (s_{f},\epsilon, \os_{f3}, (2,check)), (s'_{f},\epsilon,\os_{f3},
  (N +2, check)),
  \allowbreak
  (\os_{f3},\epsilon, \os_{f3}, (1, inc)) \}
  \cup \{ (s_f,\epsilon,
\allowbreak
 s_f, (k,*)): *\in \{inc, check\}, 2\leq k \leq N+1 \}
  \cup \{ (s'_f, \epsilon, s'_f, (k,*)): *\in \{inc, check\}, N+2\leq k \leq
  N+N'+1 \} )$.
\end{compactitem}
%
%The final state of $\cA_{+_i}$ is $\os_{fi}$, for $1 \leq i \leq 3$. \\
%We define $\mathcal S_{t_1+t_2}=\!\!\!\!\!\! \bigcup\limits_{ \cA \in \mathcal S_{t_1}, \cA'\in \mathcal S_{t_2}}
%\!\!\!\!\!\!\!\!\cA +\cA'$.
%
We set $\mathcal S_{e_1+e_2} = \bigcup_{ \cA \in \mathcal S_{e_1}, \cA'\in \mathcal S_{e_2}} \cA {+}\cA'$.

\item
Let $e=e_1^*$, $\cA= (S,\Sigma,s_0,s_f,N,\Delta) \in \mathcal S_{e_1}$, and
$\cA''$ be defined as in the previous cases.
We let $\cA_* = (S\cup \{s_f''\}, \Sigma, s_0,s_f'',N+1, \Delta''\cup \{(s_f,
\epsilon, s_0, (1,
\allowbreak
no\_op)), (s_f, \epsilon, s_f'', (2,check)),
(s_f'',\epsilon, s_f'', (1,inc)) \})$. \\
%
%where $s_f''$ is the final state. \\
%
We set $\mathcal S_{e_1^*} = \{ \cA_*: \cA \in \mathcal S_{e_1}\}$.
See \figurename{}~\ref{fig:translation}~(e) for a graphical account.

\item
Let $e=e_1^T$ and $\cA= (S,\Sigma,s_0,s_f,N,\Delta) \in \mathcal S_{e_1}$. Moreover,
let $\cA'' =(S,\Sigma, s_0, s_f, N+2, \Delta'')$ be the $2$-shifted version of $\cA$. We let
$\cA_T = (S\cup\{s_f''\}, s_0, s_f'', N+2, \allowbreak \Delta''\cup
\{ (s_f,\epsilon, s_0, (2,inc)), (s_f,\epsilon, s_f'',(2,
\allowbreak
check)),
\allowbreak
(s_f,\epsilon,
\allowbreak
s_f, (3, check)), (s_f'', \epsilon, s_f'', (1, inc)) \})$. \\
%
%where $s_f''$ is the final state. \\
We set $\mathcal S_{e_1^T} = \{ \cA_T: \cA \in \mathcal S_{e_1} \}$.
See \figurename{}~\ref{fig:translation}~(f) for a graphical account.
\end{compactitem}

The next lemma states the correctness of the proposed encoding
%
% (proof in Appendix~\ref{app:proof-automata-encoding}).
%
(proof omitted for lack of space).
\newcommand{\lemmatranslation}{
  Let $e$ be a $T$-regular expression and $\cS_e$ be the corresponding set of
  automata. It holds:

  {\centering

    $\cL(e)= \bigcup_{\cA \in \cS_e} \cL_s (\widehat \cA)$.

  }
}
\newcounter{lemmatranslationcounter}
\setcounter{lemmatranslationcounter}{\thetheorem}
\begin{lemma}\label{lem:translation}
\lemmatranslation
\end{lemma}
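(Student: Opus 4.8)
The plan is to prove the identity by structural induction on the $T$-regular expression $e$, mirroring the inductive definition of $\cS_e$. The device that links runs to word sequences is the operation $\cA \mapsto \widehat{\cA}$, which adds the transition $(s_f, \epsilon, s_0, (1, check))$ closing a loop from the final state back to the initial one: in every accepting run of $\widehat{\cA}$ the transitions that check counter $1$ are exactly the boundaries along which the accepted $\omega$-word $w$ is split into the finite words $w^1, w^2, \ldots$ forming the sequence of $\cL_s(\widehat{\cA})$ (cf.\ Figure~\ref{fig:word-splitting}). Accordingly, for each operator I set up a correspondence between accepting runs of the composite automata and pairs consisting of (i) word sequences produced by the sub-automata, which are available through the induction hypothesis, and (ii) a grouping function $f$ of the shape prescribed by the semantics of that operator. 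The two inclusions are then obtained by turning such data into a run and, conversely, by reading the data off a given run.

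For the base cases the identity is immediate: $\widehat{\cA_\emptyset}$ can never reach its final state, so it has no accepting run and $\cL_s(\widehat{\cA_\emptyset}) = \emptyset = \cL(\emptyset)$, whereas $\widehat{\cA_a}$ only reads $a^\omega$ and, regardless of how counter $1$ is incremented between two consecutive checks, splits it into $(a, a, \ldots)$, which is precisely $\cL(a)$. Already here one sees the recurring mechanism for the inductive step: on every counter carrying no genuine semantic constraint, the free-increment self-loop at the fresh final state (transitions of the form $(s_f'', \epsilon, s_f'', (1, inc))$ and their analogues) can be used to drive that counter through a value pattern meeting the \cca acceptance condition, without altering the word read. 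In the direction $\cL(e) \subseteq \bigcup_{\cA \in \cS_e} \cL_s(\widehat{\cA})$, given $\seq{v} \in \cL(e)$ together with the sub-sequences and, for $e_1^*$ and $e_1^T$, the unbounded nondecreasing $f$ with exponent sequence $N(\seq{v})$, I build an accepting run by realising the grouping through the loop-back transitions and filling the unconstrained counters as above; in the converse direction I recover $f$ from the checks of the delimiting counter and appeal to the induction hypothesis on the sub-automata.

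The conceptual heart is the case $e = e_1^T$. There counter $2$ is incremented once per additional iteration of $e_1$, via $(s_f, \epsilon, s_0, (2, inc))$, and checked exactly when a group is closed, via $(s_f, \epsilon, s_f'', (2, check))$, so that the values checked on counter $2$ along a run reproduce the exponent sequence $N(\seq{v})$ up to a uniform offset. Hence the acceptance requirement $| \infcheck{\pi}{2} | = +\infty$ holds iff infinitely many exponents occur infinitely often, i.e.\ iff $\exists^\omega n\, \forall k\, \exists i > k.\,(f(i+1) - f(i) = n)$, which is precisely the defining clause of $(.)^T$; meanwhile the self-loop $(s_f, \epsilon, s_f, (3, check))$ resets the delimiter inherited from $\cA$ (its former counter $1$, shifted to $3$) so that the inner iterations are consumed one element of $\seq{u} \in \cL(e_1)$ at a time. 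The concatenation and star cases are analogous but lighter: for $e_1 \cdot e_2$ the checks of counter $2$ and of counter $N+2$ synchronise the componentwise splitting $w_i = u_i v_i$ demanded by the semantics of $\cdot$, while for $e_1^*$ the no-op loop-back $(s_f, \epsilon, s_0, (1, no\_op))$ merges several $e_1$-iterations into a single group under no constraint on the exponents.

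I expect the union case $e = e_1 + e_2$ to be the main obstacle, since on word sequences $+$ is not language union but the mixing operator, so that an accepted sequence may take each of its elements from either $e_1$ or $e_2$. The purpose of the three automata $\cA_{+_1}, \cA_{+_2}, \cA_{+_3}$ is precisely to cover the possible ways in which the acceptance condition can be fulfilled on the two disjoint blocks of sub-counters: the auxiliary self-loops $\{ (s_f, \epsilon, s_f, (k, *)) : * \in \{inc, check\} \}$ (and their primed counterparts) let the counters of the currently inactive sub-automaton be driven through an accepting value pattern even when that sub-expression contributes only finitely many elements, or none. The delicate point will be to show that every shuffle in $\cL(e_1 + e_2)$ is accepted by at least one of the three automata, and conversely that every accepting run of any of them projects onto a legitimate shuffle; this calls for a case analysis according to whether each block of sub-counters receives infinitely many genuine checks from its own sub-automaton or has to be sustained artificially, and it is in this bookkeeping that most of the work lies.
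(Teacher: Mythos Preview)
The paper omits the proof of this lemma entirely (``proof omitted for lack of space''), so there is no argument in the paper to compare against. Your plan---structural induction on $e$, using the $(1,check)$ transition added by $\widehat{(\cdot)}$ as the delimiter that turns an accepting run into a word sequence, and matching the remaining counters to the semantic constraints of each constructor---is the natural one and is consistent with how the construction is set up in Section~\ref{sec:encoding}. Your identification of the key points (counter~$2$ in the $T$-case tracking the exponent up to an offset, the three automata in the $+$-case covering the possible ways of sustaining the two blocks of sub-counters) is accurate.

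Two small items worth tightening when you write it out in full. First, in the $T$-case the automaton forces each group to contain at least one iteration of $e_1$, whereas the semantics of $(.)^T$ allows $f(i+1)-f(i)=0$; you should argue explicitly why this causes no loss (e.g.\ any word sequence in $\cL(e_1^T)$ with some zero exponents can be re-grouped, or matched against a run that absorbs the empty components, without disturbing the $\exists^\omega$ condition). Second, in the $+$-case your informal description of what each of $\cA_{+_1},\cA_{+_2},\cA_{+_3}$ covers is right in spirit, but the actual case split is not quite ``which side contributes only finitely many elements'': all three automata allow infinitely many visits to both sides, and the distinction is which block(s) of sub-counters are \emph{allowed} to be driven artificially at the opposite final state. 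Making this precise is exactly the bookkeeping you anticipate, and it is where the argument needs the most care.
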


%%(proof in Appendix~\ref{app:missing-proofs}).
%\newcommand{\lemtranslation}{
%%
%%Let $t$ be a $T$-regular expression and $\cS_t = \{ \cA_1,\ldots, \cA_n \}$ be
%%the corresponding set of automata, with $\cA_i= (S_i,\Sigma,s^i_0,s^i_f,N_i,\Delta_i)$ and $i \in \{ 1, \ldots, n \}$. It holds:
%%
%%\noindent
%%{\small
%%$\cL(t)= \hspace{-2.5mm} \bigcup\limits_{1 \leq i \leq n} \hspace{-2.5mm} \cL_s
%%((S_i, \Sigma,s^i_0,s^i_f,N_i,\Delta_i\cup\{ (s^i_f,\epsilon, s^i_0, (1,check))\}))$}.
%Let $t$ be a $T$-regular expression and $\cS_t$ be
%the corresponding set of automata. It holds:
%
%{\centering
%$\cL(t)= \bigcup_{\cA \in \cS_t} \cL_s
%(\widehat \cA)$.
%
%}
%%
%%$\cL(t)=
%%\bigcup_{\cA_i = (S_i,\Sigma,s^i_0,N_i,\Delta_i) \in \cS_t}
%%%\!\!\!\!\!\!\!\!\!\!\!\!\!\!\!\!\!\!
%%%\bigcup\limits_{\cA_i = (S_i,\Sigma,s^i_0,N_i,\Delta_i) \in \cS_t}
%%%\!\!\!\!\!\!\!\!\!\!\!\!\!\!\!\!\!\!
%%\cL_s ((S_i,\Sigma,s^i_0,N_i,\Delta_i\cup\{ (s^i_f,\epsilon, s^i_0, (1,check))\}))$.
%}
%
%\newcounter{lemtranslationcounter}
%\setcounter{lemtranslationcounter}{\thetheorem}
%\begin{lemma} \label{lem:translation}
%  \lemtranslation
%\end{lemma}
%
%
%\setcounter{theorem}{\thelemtranslationcounter}
%\begin{lemma}
%\lemtranslation
%\end{lemma}

%\subsection{Encoding of \texorpdfstring{$\omega T$}{omega T}-regular expressions}

%\paragraph{Encoding of \texorpdfstring{$\omega T$}{omega T}-regular
%  expressions.}

\noindent{\bf Encoding of \texorpdfstring{$\omega T$}{omega T}-regular
  expressions.}
We are now ready to deal with $\omega T$-regular expressions
%, generated by the grammar  $T \ ::= \ T + T \ \mid \ R.T \ \mid \ t^\omega$
(sub-grammar rooted in $E$ in paragraph ``$\omega T$-regular languages'' at page~\pageref{page:Tregexp-grammar}).
%As basis for our inductive construction we assume to have the set $S^t_{\cA}$
%for every sub-expression $t$ built in the construction above.
%
We must distinguish three cases.
\begin{compactitem}
\item 
If $E= E_1 + E_2$, then $\mathcal S_{E_1+E_2}$ is equal to $\mathcal S_{E_1}
\cup \mathcal S_{E_2}$.
\item
  If $E=R \cdot E'$, then let $A_R=(S_R, F_R,\Sigma, s^R_0, \Delta_R)$ be
the NFA that recognises the regular language $\cL(R)$, and $\cA= (S,\Sigma,s_0,
s_f, N,\Delta) \in \mathcal S_{E'}$.
We let $A_R {\cdot} \cA = (S\cup S_R, \Sigma, s^R_0, s_f, N,
\Delta \cup \{(s,\sigma, s', (1, no\_op)): (s,\sigma, s') \in \Delta_R \}
\cup \{(s,\epsilon, s_0, (1, no\_op)): s\in F_R\})$.
%, with final state $s_f$.
We set $\mathcal S_{R \cdot E'} = \{ A_R {\cdot} \cA: \cA \in \mathcal S_{E'}\}$.
%
%\item
%  Finally, let $T=t^{\omega}$. We define $\mathcal S_{t}$ as $\{\cA_1, \ldots, \cA_n\}$,
%where $\cA_i=(S_i,\Sigma,s^i_0,N_i,\Delta_i)$ and $s^i_f$ is the final state
%of $\cA_i$, for all $1\leq i\leq n$.
%$\mathcal S_{t^\omega}$ is the set $\{ (S_i,\Sigma,s^i_0,N_i, \Delta_i
%\cup\{ (s^i_f,\epsilon, s^i_0,(1, check) ) \} ): 1\leq i\leq n \}$.
\item
  Finally, if $E=e^{\omega}$, then
$\mathcal S_{e^\omega}$ is the set $\{ \widehat{\cA} : \cA \in \mathcal S_{e} \}$.
%\begin{compactitem}
%\item Let $T= T_1 + T_2$. $\mathcal S_{T_1+T_2}$ is equal to $\mathcal S_{T_1}
%\cup \mathcal S_{T_2}$.
%
%\item Let $T=R \cdot T'$. Moreover, let $A_R=(S_R, F_R,\Sigma, s^R_0, \Delta_R)$ be
%the NFA that recognises the regular language $\cL(R)$ and let $\cA= (S,\Sigma,s_0,
%N,\Delta) \in S_{T'}$.
%We define $A_R \cdot \cA$ as the automaton $(S\cup S_R, \Sigma, s^R_0, N,
%\Delta \cup \{(s,\sigma, s', (1, no\_op)): (s,\sigma, s') \in \Delta_R \}
%\cup \{(s,\epsilon, s_0, (1, no\_op)): s\in F_R\})$, with final state $s_f$.
%$\mathcal S_{R \cdot T'}$ is the set $\{ A_R \cdot \cA: \cA \in \mathcal S_{T'}\}$.
%
%\item Let $T=t^{\omega}$. We let $\mathcal S_{t}=\{\cA_1, \ldots, \cA_n\}$,
%where $\cA_i=(S_i,\Sigma,s^i_0,N_i,\Delta_i)$ and $s^i_f$ is the final state
%of $\cA_i$, for every $1\leq i\leq n$.
%$\mathcal S_{t^\omega}$ is the set $\{ (S_i,\Sigma,s^i_0,N_i, \Delta_i
%\cup\{ (s^i_f,\epsilon, s^i_0,(1, check) ) \} ): 1\leq i\leq n \}$.
%\end{compactitem}
%
%Therefore, for any given $\omega T$ regular expression $T$, it is possible
%to construct the corresponding set $\mathcal S_{T}$ of $CQ$ automata.
\end{compactitem}

As in the case of $T$-regular expressions, it is easy to check that,  for all $\omega T$-regular expressions $E$:

{\centering
  $\cL(E) = \bigcup_{\cA\in \mathcal S_{E}} \cL(\cA)$.

}

To complete the reduction, we only need to show how to merge the automata in
$\mathcal S_E$
%(for any given $\omega T$-regular expression $E$)
into a single one
%$CQ$ automata
$\mathcal A_E$ accepting the language $\cL(E)$.
%To this end, for a given $\omega T$-regular expression $T$,
Let $\mathcal S_{E} = \{\cA_{1}, \ldots, \cA_{n}\}$,
with $\cA_{i}=(S_{i},\Sigma,s^{i}_0,N_{i},\Delta_{i})$,
for $1\leq i\leq n$, and let $N_{\max} = \max\{N_i:1\leq i \leq n\}$.
For each $1\leq i\leq n$,
%we define the transition relation
let $ \overline{\Delta}_{i}= \Delta_{i} \cup \{ (s^i_0,\epsilon, s^i_0,(k, *)): *\in \{inc, check\}, N_i < k \leq N_{\max} \}$.
Finally, let $s_0$ be a fresh state. We define $\cA_{E}$ as the automaton
$(\bigcup_{1\leq i\leq n} S_{i}\cup \{s_0\}, \Sigma, s_0 ,N_{\max},
\bigcup_{1\leq i\leq n}(\overline{\Delta}_{i} \cup \{(s_0,\epsilon,
s^{i}_0,(1, no\_op) ) \}))$.

\begin{theorem}\label{thm:encoding}
For every $\omega{T}$-regular expression $E$, there exists a \cca ${\cal A}$
such that $\cL(E) = \cL({\cal A})$.
\end{theorem}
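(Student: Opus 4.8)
The plan is to obtain $\cA$ as the merged automaton $\cA_E$ defined just above the statement, and to prove $\cL(\cA_E) = \cL(E)$ in two stages. First I would establish the compositional identity $\cL(E) = \bigcup_{\cA \in \cS_E} \cL(\cA)$; then I would verify that the merge preserves the accepted language, i.e.\ $\cL(\cA_E) = \bigcup_{\cA \in \cS_E} \cL(\cA)$. Chaining the two equalities yields the theorem, with $\cA = \cA_E$ as witness.

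For the first stage I would argue by structural induction on the $E$-grammar. The case $E = E_1 + E_2$ is immediate since $\cS_{E_1+E_2} = \cS_{E_1} \cup \cS_{E_2}$ and acceptance is existential over the component automata, so the accepted language is $\cL(E_1) \cup \cL(E_2)$. The case $E = R \cdot E'$ follows from the fact that the prepended copy of the NFA $A_R$ reads exactly a word of $\cL(R)$ using only neutral $(1, no\_op)$ transitions and then hands control to some $\cA \in \cS_{E'}$ through an $\epsilon$-transition into $s_0$; since the prefix is finite it leaves the infinitely-often check values, and hence acceptance, unchanged, giving $\cL(R) \cdot \cL(E')$. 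The crucial bridging case is $E = e^\omega$, where $\cS_{e^\omega} = \{\widehat{\cA} : \cA \in \cS_e\}$: here I would combine the definition of $\cL_s(\widehat{\cA})$ (the splitting of an accepted $\omega$-word at the checks of the first counter) with the semantics of the $\omega$-constructor and invoke Lemma~\ref{lem:translation}, which gives $\cL(e) = \bigcup_{\cA \in \cS_e} \cL_s(\widehat{\cA})$; concatenating the word sequences of $\cL(e)$ is exactly $\cL(e^\omega)$, and this concatenation is precisely what an accepting run of $\widehat{\cA}$ reads.

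For the merge I would prove both inclusions directly. For $\supseteq$, an accepting run of $w$ on some $\cA_i$ is lifted to $\cA_E$ by prefixing the $\epsilon$-step $(s_0, \epsilon, s_0^i, (1, no\_op))$; the counters $1, \ldots, N_i$ are then checked infinitely often exactly as in $\cA_i$, and the padding self-loops in $\overline{\Delta}_i \setminus \Delta_i$ are fired to check the surplus counters $N_i < k \le N_{\max}$ infinitely often. For $\subseteq$, I would use that the state sets $S_i$ are pairwise disjoint and that the only transitions leaving the fresh state $s_0$ are the initial $\epsilon$-choices: an accepting run of $\cA_E$ therefore commits to a single $\cA_i$, and erasing the padding loops (which are $\epsilon$-labelled and touch only counters $> N_i$) projects it to an accepting run of $\cA_i$ on the same $w$. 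The main obstacle is precisely the acceptance bookkeeping in the $\supseteq$ direction: the \cca acceptance condition demands that \emph{every} one of the $N_{\max}$ counters be checked infinitely often, so the surplus counters of a component with fewer counters must be serviced by the padding loops, and this works only if the padded state recurs along the accepting run. For the $e^\omega$ automata this is guaranteed by the loop-back transition $(s_f, \epsilon, s_0, (1, check))$, which returns to the padded initial state infinitely often; confirming that such a recurrence is available for every automaton in $\cS_E$, including those carrying a regular prefix, is the delicate point that the proof must pin down.
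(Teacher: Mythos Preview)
Your plan follows the paper exactly: the paper offers no proof beyond the construction and the assertion that $\cL(E)=\bigcup_{\cA\in\cS_E}\cL(\cA)$ is ``easy to check,'' and your structural-induction sketch for that identity, together with the two inclusions for the merge, is precisely the intended argument (with Lemma~\ref{lem:translation} supplying the $e^\omega$ case).

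The obstacle you single out at the end is real, and you are right not to wave it away. When $\cA_i = A_R\cdot\widehat{\cA}$ carries a nontrivial regular prefix, its initial state $s^i_0=s^R_0$ is visited exactly once in any accepting run: the run traverses $A_R$ and then loops forever inside $\widehat{\cA}$, with no transition back into $A_R$. Hence the padding self-loops $\{(s^i_0,\epsilon,s^i_0,(k,\ast)) : N_i<k\le N_{\max}\}$ can be fired only finitely many times before the run leaves $s^R_0$ for good, and in the lifted run on $\cA_E$ the surplus counters $N_i<k\le N_{\max}$ are checked only finitely often, so $|\infcheckpik|=+\infty$ fails for those $k$. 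The $\supseteq$ inclusion of the merge therefore does not go through as stated.

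The repair is straightforward: attach the padding loops to a state that every accepting run visits infinitely often. Every automaton in $\cS_E$ has the shape $A_{R_1}\cdots A_{R_p}\cdot\widehat{\cA}$ for some $p\ge 0$, and the initial state of the inner $\widehat{\cA}$ (equivalently, its final state $s_f$) is revisited infinitely often via the loop-back transition $(s_f,\epsilon,s_0,(1,check))$. Placing the padding there instead of at $s^i_0$ makes both directions of your merge argument work verbatim.
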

\section{From \texorpdfstring{$\omega{T}$}{omega T}-regular languages to \sonesU}
\label{sec:logic}

In this section, we provide an encoding of $\omega{T}$-regular expressions into \sonesU.

\noindent{\bf Definition of \sonesU.}
%The logic \sonesU is the extension of \sones with the unbounding quantifier
%\Uquantifier~\cite{bojan11tcs}.
The logic \sones is \mso interpreted over infinite words.
Its formulas are built over a finite, non-empty alphabet
$\Sigma$ and sets $V_1$ and $V_2$ of first- and second-order variables,
respectively:

{\centering
$
\begin{array}
  {@{\hspace{0mm}}l@{\hspace{1.9mm}}l@{\hspace{1.9mm}}l@{\hspace{0mm}}}
  \varphi & ::=
  & \tau \in P_\sigma \ \mid \ \tau \in X \ \mid \ \neg \varphi \ \mid \ \varphi
    \vee \varphi \ \mid \ \exists x. \varphi \ \mid \ \exists X. \varphi \\
  \tau & ::= & x \ \mid \ s(\tau)
\end{array}$

}

\noindent
where $\sigma \in \Sigma$, $x \in V_1$, and $X \in V_2$.
We denote by $V_\Sigma$ the set $\{ P_\sigma \mid \sigma \in \Sigma
\}$.\footnote{We also use the formulation $P(\tau)$ in stead of $\tau \in P$ ($P
  \in V_2 \cup V_\Sigma$).}
Technically, elements of $V_\Sigma$ are second-order variables (i.e., they range
over sets of positive natural numbers), but with a standard intended semantics:
they partition $\bbNp$ and an interpretation for them $\mathcal I : V_\Sigma
\rightarrow 2^{\bbNp}$ identifies an infinite word $w^{\mathcal I}$ over
$\Sigma$ as follows: $w^{\mathcal I}[i] = \sigma$ iff $i \in \mathcal
I(P_\sigma)$, for every $i \in \bbNp$, $\sigma \in \Sigma$.
Notice also that variables in $V_\Sigma$ always occur free (i.e., not bound by
any quantifier).
A formula is \emph{closed} if the only free variables are the ones in $V_\Sigma$;
otherwise, it is \emph{open}.
The semantics of a closed formula $\varphi$, denoted by $\semanticsvarphi$, is
the set of all infinite words that \emph{satisfy} $\varphi$, i.e,
$\semanticsvarphi = \{ w^{\mathcal I} \mid \mathcal I \models \varphi \}$.

The logic \sonesU extends \sones with the \emph{unbounding quantifier}
\Uquantifier, which is defined as in~\cite{bojan11tcs}:

{\centering

  $\Uquantifier X.\varphi(X) := \bigwedge_{n \in \bbN}
  \exists_{\mathsf{fin}} X (\varphi(X) \wedge |X| \geq n)$.

}

\noindent
where $\exists_{\mathsf{fin}}$ allows for existential quantification over finite
sets, i.e., $\exists_{\mathsf{fin}} X . \varphi \equiv \exists X . (\varphi \wedge \exists y
. X \subseteq \{ 1, \ldots, y \})$ for every second-order
variable $X$ and \sonesU-formula $\varphi$; the universal quantifier
$\forall_{\mathsf{fin}}$ is defined as the dual of $\exists_{\mathsf{fin}}$.
Intuitively, \Uquantifier makes it possible to say that a formula
$\varphi(X)$ (containing at least one second-order free variable $X$) is
satisfied by infinitely many finite sets and there is no bound on their sizes.
The \emph{bounding quantifier} \Bquantifier is defined as the negation of
\Uquantifier:
$\Bquantifier X. \varphi(X) := \neg \Uquantifier X. \varphi(X)
\equiv \bigvee_{n \in \bbN} \forall_{\mathsf{fin}} X (\varphi(X) \rightarrow |X|
< n)$.
%
%
%\smallskip
%
%{\centering
%  $
%  \begin{array}{l@{\hspace{1mm}}c@{\hspace{1mm}}l}
%    \Bquantifier X. \varphi(X)
%    & :=
%    & \neg \Uquantifier X. \varphi(X) \\
%    & \equiv
%    & \bigvee_{n \in \bbN} \forall_{\mathsf{fin}} X (\varphi(X) \rightarrow |X| < n).
%  \end{array}$
%
%}
%
%\smallskip
%
%\noindent
%
Its intended meaning is: there is a bound on the sizes of finite sets that
satisfy $\varphi(X)$.

\smallskip

\noindent{\bf Encoding.}
In what follows, given an $\omega T$-regular expression $E$ we show how to build
a formula $\varphi_E$ for which $\cL(E) = \semantics{\varphi_E}$.
For the lack of space, we only give an intuitive idea of the encoding.

For every $\omega T$-regular (sub-)expression $E$, let $E_{[T \mapsto *]}$ be the
$\omega$-regular (sub-)expression obtained from $E$ by replacing the $T$-constructor
with the $*$-constructor (e.g., if $E= (a^Tb)^\omega$, then $E_{[T \mapsto *]} =
(a^*b)^\omega$) and $\varphi_{E_{[T \mapsto *]}}$ be the \sones-formula for
which $\semantics{\varphi_{E_{[T \mapsto *]}}} = \cL({E_{[T \mapsto *]}})$ holds
(its existence is guaranteed by the equivalence between \sones 
and $\omega$-regular languages).

Let $E$ be an $\omega T$-regular expression.
In order to correctly define $\varphi_{E}$ we need to enrich such a formula
$\varphi_{E_{[T \mapsto *]}}$ to enforce the condition imposed by occurrences of the
$T$-constructor in $E$.
%, for every sub-expression of $t$ of the form $e^T$.
%
%More precisely, we need to
%
The intuitive idea is to control, for every sub-expression $e^T$, the sizes of
\emph{$e$-blocks} (i.e., maximal blocks of consecutive occurrences of finite
words in $\cL(e_{[T \mapsto *]})$) along infinite words.
(Notice that $e_{[T \mapsto *]}$ is a regular expression.)
According to the semantics of the $T$-constructor, we have to force the
existence of $e$-blocks of infinitely many different sizes, and infinitely many
of such sizes must occur infinitely often.
To this end, given a regular expression $e$, we build a formula \Tcondition{e}
that is satisfied by an infinite word $w$ iff there are infinitely many $k \in
\bbN$ such that $w$ features infinitely many $e$-blocks of size $k$.
In our construction, we use formulas $\isregexp{e}(x,y)$ (for every regular
expression $e$), featuring two free first-order variables, with the
following semantics: $w$ satisfies $\isregexp{e}[x \mapsto \bar x,y \mapsto
\bar y]$ iff $w[\bar x, \bar y] \in \cL(e)$.
In addition, we use the unary predicate $\isbeginningofexp{e}(x)$, with the
following semantics: $w$ satisfies $\isbeginningofexp{e}[x \mapsto \bar x]$ iff
$w[\bar x, \bar y] \in \cL(e)$ for some $\bar y \in \bbNp$.

Let $e$ be a regular expression.
To begin with, we define formula $\block{e}(X)$, stating that $X$ is a maximal
set of positions from which consecutive sub-words belonging to $\mathcal L(e)$
begin; roughly speaking, $X$ is an $e$-block.

\smallskip

\noindent
{\centering
$\begin{array}{@{\hspace{0mm}}l@{\hspace{.1mm}}l@{\hspace{.1mm}}l@{\hspace{0mm}}}
\block{e}(X)
& :=
& \exists y \exists z . [
 \isregexp{e^*}(y,z)
\wedge X \subseteq \{ y, \ldots, z \}
\wedge \forall x . (x \in \{ y,\ldots, z \} \wedge \isbeginningofexp{e}(x)
\rightarrow x \in X ) ].
\end{array}$

}

%\noindent
%$\consecutive{e}(X) \equiv (\forall x \in X . \isbeginningEx) \wedge
%(\forall x < y < z . x,z \in X \rightarrow \isinexprEy ) $

\smallskip

\noindent Next formula $\blockset{e}(Y)$ says that
\begin{inparaenum}[$(i)$]
\item $Y$ only contains $e$-blocks,
\item it contains infinitely many of them, and
\item there is an upper bound on their sizes.
\end{inparaenum}
In this case, we say that $Y$ is an \emph{$e$-block-set}.

\smallskip

\noindent
{\centering
$\begin{array}{@{\hspace{0mm}}l@{\hspace{0mm}}}
   \blockset{e}(Y) :=
   [\forall y . (y \in Y \rightarrow \exists_{\mathsf{fin}} X . (\block{e}(X)
   \wedge X \subseteq Y \wedge y \in X))] \wedge \\
   \hspace{24mm}
   \wedge [\forall y \exists_{\mathsf{fin}} X . ( \block{e}(X) \wedge X \subseteq Y
   \wedge \min X > y )] \wedge [\Bquantifier x . (X \subseteq Y \wedge \block{e}(X))] \\
\end{array}$

}

\smallskip

\noindent Finally, we define \Tcondition{e} as
   $\forall Y . [\blockset{e}(Y) \rightarrow
   \exists Z . ( \blockset{e}(Z) \wedge Y \subsetneq Z \wedge \exists^\omega x
   . x \in Z \setminus Y) ]$,
%
%
%\smallskip
%
%\noindent
%{\centering
%$\begin{array}{@{\hspace{0mm}}l@{\hspace{0mm}}}
%   \forall Y . [\blockset{e}(Y) \rightarrow \\
%   \hspace{4mm}
%   \exists Z . ( \blockset{e}(Z) \wedge Y \subsetneq Z \wedge \exists^\omega x
%   . x \in Z \setminus Y) ].
%\end{array}$
%
%}
%
%\smallskip
%
where $\exists^\omega$ allows for infinite existential first-order
quantification, i.e., $\exists^{\omega} x . \varphi \equiv \forall y \exists x
. (x > y \wedge \varphi)$.

\newcommand{\lemmalogic}{
  Let $e$ be a regular expression.
  An infinite word $w$ satisfies \Tcondition{e} iff there are infinitely many
  natural numbers $k$ such that $w$ features infinitely many $e$-blocks of size
  $k$.
}
\newcounter{lemmalogiccounter}
\setcounter{lemmalogiccounter}{\thetheorem}
\begin{lemma}
%[proof in Appendix~\ref{app:proof-logic-encoding}]
\label{lem:logic}
\lemmalogic
\end{lemma}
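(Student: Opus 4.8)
The plan is to unfold the three auxiliary formulas down to their combinatorial meaning and then reduce satisfaction of \Tcondition{e} to a purely quantitative statement about the sizes of $e$-blocks in $w$. First I would fix the intended reading of the building blocks: relying on the stated semantics of $\isregexp{e}$ and $\isbeginningofexp{e}$, a set $X$ satisfies $\block{e}(X)$ exactly when $X$ collects the starting positions of a maximal run of consecutive $\mathcal L(e)$-words, so that $|X|$ is the number of consecutive $e$-repetitions, which I call the \emph{size} of the block. Consequently $Y$ satisfies $\blockset{e}(Y)$ iff $Y$ is a union of entire $e$-blocks that contains infinitely many of them and admits a uniform upper bound $B$ on their sizes, the last condition being precisely what the $\Bquantifier$-conjunct enforces. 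Writing $N_i$ for the set of sizes $k$ such that $w$ contains infinitely many $e$-blocks of size $k$, the right-hand side of the lemma becomes exactly ``$N_i$ is infinite''.

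With this dictionary in place, $\Tcondition{e}$ says that \emph{every} $e$-block-set $Y$ admits a strictly larger $e$-block-set $Z \supsetneq Y$ with $Z \setminus Y$ infinite; equivalently, no $e$-block-set is \emph{maximal} (inextensible by infinitely many further positions). The argument then splits on the finiteness of $N_i$.

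For the direction ``$N_i$ infinite $\Rightarrow w$ satisfies \Tcondition{e}'', I would take an arbitrary $e$-block-set $Y$ with size bound $B$, pick $k \in N_i$ with $k > B$ (possible since $N_i$ is unbounded), and set $Z = Y \cup \{\text{all } e\text{-blocks of size } k\}$. As $k > B$, none of these blocks already lie in $Y$, so $Z$ is again a union of entire $e$-blocks, still contains infinitely many blocks, is bounded by $k$, and $Z \setminus Y$ is infinite; hence $Z$ witnesses the implication for $Y$. For the converse I argue contrapositively: if $N_i$ is finite and nonempty, let $M = \max N_i$ and take $Y^{*}$ to be the union of \emph{all} $e$-blocks of size at most $M$. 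Then $Y^{*}$ is a legitimate $e$-block-set (some size $k_0 \le M$ lies in $N_i$, contributing infinitely many blocks, and all sizes are bounded by $M$), and any $e$-block-set $Z \supsetneq Y^{*}$ can only add blocks of size $> M$; every such size lies outside $N_i$ and so occurs finitely often, and since $Z$ is size-bounded only finitely many new sizes are available, whence $Z \setminus Y^{*}$ is finite. Thus $Y^{*}$ is maximal and \Tcondition{e} fails.

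The routine but genuinely delicate part is discharging the semantic correctness of $\block{e}$ and, above all, of $\blockset{e}$ (that $\exists_{\mathsf{fin}}$ really isolates single finite blocks and that $\Bquantifier$ caps their sizes), which is where most of the bookkeeping lives; I would isolate these as preliminary sub-claims. The main conceptual subtlety, however, is the degenerate case $N_i = \emptyset$: then $w$ admits no bounded infinite family of blocks, so \emph{no} $Y$ satisfies $\blockset{e}(Y)$ and \Tcondition{e} holds vacuously, while the right-hand side is false. I would resolve this by appealing to the context in which the lemma is applied — the surrounding formula $\varphi_{E_{[T \mapsto *]}}$ guarantees that $w$ carries infinitely many $e$-blocks — so that the equivalence is only invoked when at least one $e$-block-set exists; equivalently, one records $N_i \neq \emptyset$ as a standing hypothesis. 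Modulo this edge case, the maximality reformulation yields the stated equivalence immediately.
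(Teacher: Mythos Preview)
Your argument mirrors the paper's proof almost exactly: the paper too constructs, for the forward (left-to-right) direction, the set $\bar Y$ of all $e$-blocks of size at most $k_{\max}$ and derives a contradiction from the existence of a strictly larger $e$-block-set $\bar Z$, and for the backward direction extends an arbitrary $e$-block-set $\bar Y$ by all blocks of some size $k' > k_{\max}$. Your observation about the degenerate case $N_i = \emptyset$ is apt---the paper's proof tacitly assumes such a $k_{\max}$ exists---and your proposed resolution via the surrounding context (which guarantees at least one $e$-block-set) is the intended reading.
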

\begin{proof}
  Let $w$ be an infinite word that satisfies \Tcondition{e} and let us assume
  that there are only finitely many natural numbers $k$ such that infinitely
  many $e$-blocks of size $k$ occur in $w$.
  Let $k_{\max}$ be the largest among such numbers and let $\bar Y$ be the
  $e$-block-set containing all $e$-blocks of size not larger than $k_{\max}$.
  Clearly, $w$ satisfies $\blockset{e}[Y \mapsto \bar Y]$ and thus, by the
  definition of \Tcondition{e}, there exists an $e$-block-set $\bar Z$ that
  contains infinitely many $e$-blocks (of bounded size) that do not belong to
  $\bar Y$.
  Since $\bar Z \supsetneq \bar Y$ (that means $Z$ contains all $e$-blocks in
  $\bar Y$ as well), there exists a number $k' > k_{\max}$ such that infinitely
  many $e$-blocks of size $k'$ occur in $w$.
  This is in contradiction with our initial hypothesis that $k_{\max}$ is the
  largest number such that infinitely many $e$-blocks of size $k_{\max}$ occur in
  $w$, hence the thesis follows.

  In order to prove the converse direction, let us assume that there are
  infinitely many natural numbers $k$ such that infinitely many $e$-blocks of
  size $k$ occur in $w$ and let $\bar Y$ be an $e$-block-set (i.e., $w$ satifies
  $\blockset{e}[Y \mapsto \bar Y]$).
  By the definition of $\blockset{e}(Y)$ (in particular, the third conjunct),
  there is a bound on the size of all $e$-blocks in $\bar Y$.
  Let $k_{\max}$ be such a bound.
  By our assumption, there is a number $k' > k_{\max}$ such that infinitely many
  $e$-blocks of size $k'$ occur in $w$.
  Let $\bar Z$ be the set containing all $e$-blocks in $\bar Y$ and, in addition
  all $e$-blocks of size $k'$ occurring in $w$.
  Clearly, $\bar Z$ is an $e$-block-set that contains $\bar Y$ and feaures
  infinitely many elements not belonging to $\bar Y$ (i.e., $w$ satisfies the
  formula $(\blockset{e}(Z) \wedge Y \subsetneq Z \wedge \exists^\omega x . x \in
  Z \setminus Y)[Y \mapsto \bar Y, Z \mapsto \bar Z]$), and thus
  $w$ satisfies \Tcondition{e}.
\end{proof}

Making use of formulas \Tcondition{e}, for every regular expression $e$, it
is possible to strengthen $\varphi_{E_{[T \mapsto *]}}$ to enforce the
condition, imposed by occurrences of the $T$-constructor in $E$, on sizes of
$e$-blocks occurring in infinite words, for every sub-expression $e^T$ of $E$.
Thus, we can conclude the main result of this section.

\begin{theorem}
  For every $\omega T$-regular expression $E$, we have that
  $\semantics{\varphi_E} = \cL(E)$.
\end{theorem}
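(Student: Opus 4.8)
The plan is to prove both inclusions $\cL(E) \subseteq \semantics{\varphi_E}$ and $\semantics{\varphi_E} \subseteq \cL(E)$ simultaneously, by structural induction on $E$ along the grammar rooted in the non-terminal $E$. The two top-level constructors are routine: $E_1 + E_2$ is handled by taking the disjunction $\varphi_{E_1} \vee \varphi_{E_2}$, and $R \cdot E'$ by guarding $\varphi_{E'}$ with a \sones-formula recognising a prefix in $\cL(R)$ (obtainable from an NFA for the regular language $R$) and relativising the interpretation of $\varphi_{E'}$ to the corresponding suffix. Both follow from the inductive hypothesis together with the closure of \sones under the relevant operations, and neither interferes with the infinitary $T$-conditions, since these concern the tail of the word. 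The whole difficulty therefore concentrates in the case $E = e^\omega$.

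For $E = e^\omega$ I would set $\varphi_E := \varphi_{E_{[T \mapsto *]}} \wedge \bigwedge_{(e')^T} \Tcondition{(e')_{[T \mapsto *]}}$, the conjunction ranging over all sub-expressions $(e')^T$ of $e$. The first conjunct pins down membership in the underlying $\omega$-regular language: by the equivalence between \sones and $\omega$-regular languages we have $\semantics{\varphi_{E_{[T \mapsto *]}}} = \cL(E_{[T \mapsto *]})$, and $\cL(E) \subseteq \cL(E_{[T \mapsto *]})$ holds unconditionally. It thus remains to show that the conjoined $\Tcondition{\cdot}$ formulas carve out exactly those words of $\cL(E_{[T \mapsto *]})$ admitting a decomposition that respects every $T$-constraint. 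Here the key tool is Lemma~\ref{lem:logic}: $w \models \Tcondition{(e')_{[T \mapsto *]}}$ precisely when infinitely many sizes $k$ are realised by infinitely many $(e')_{[T \mapsto *]}$-blocks of $w$. The bridging observation is that, for a word decomposed according to $e^\omega$, the sequence of exponents $N(\seq u)$ attached to an occurrence $(e')^T$ coincides, up to finitely many entries, with the sequence of sizes of the maximal $(e')_{[T \mapsto *]}$-blocks along $w$; hence the semantic requirement that $N_i(\seq u)$ be infinite is \emph{equivalent} to the block condition captured by Lemma~\ref{lem:logic}.

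Given this correspondence, the forward inclusion is direct: from a witnessing decomposition of $w \in \cL(E)$ one reads off, for each $T$-occurrence, a sequence of exponents with infinitely many values occurring infinitely often, which the block correspondence converts into satisfaction of the matching $\Tcondition{\cdot}$ conjunct, while $w \in \cL(E_{[T \mapsto *]})$ gives $w \models \varphi_{E_{[T \mapsto *]}}$. For the converse, $w \models \varphi_{E_{[T \mapsto *]}}$ first yields a $*$-decomposition of $w$, and the block conditions guaranteed by the $\Tcondition{\cdot}$ conjuncts let one refit the split functions so that every induced exponent sequence realises the required infinitely-often-repeated sizes, producing a decomposition witnessing $w \in \cL(E)$. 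I expect the main obstacle to lie exactly in this block-versus-exponent correspondence once the shuffling operator $+$ and nested $T$'s enter: when words from distinct sub-languages of an expression such as $(e_1 + e_2)^T$ are interleaved, a single maximal $(e')_{[T \mapsto *]}$-block need not match a single exponent of the $T$-argument, so one must argue that it is the \emph{multiset of sizes occurring infinitely often} that is preserved, and simultaneously ensure that the several $\Tcondition{\cdot}$ conjuncts can be satisfied by one globally consistent split function rather than by incompatible local ones.
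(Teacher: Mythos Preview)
Your overall strategy---reduce to the $\omega$-regular backbone $\varphi_{E_{[T\mapsto *]}}$ and superimpose the $T$-constraints via \Tcondition{\cdot}---is exactly the route the paper sketches, and your structural induction on the outer grammar is fine. The paper itself gives only an ``intuitive idea'' of the encoding, so there is no fully worked-out proof to compare against; what matters is whether your concrete instantiation of that idea is sound.

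It is not. The formula you write down, $\varphi_{E_{[T\mapsto *]}}\wedge\bigwedge_{(e')^T}\Tcondition{(e')_{[T\mapsto *]}}$, is strictly weaker than $\cL(E)$, and the failure does \emph{not} require $+$ or nested $T$'s. Take $E=(a^T b\, a^* c)^\omega$. A word $w\in\cL(E)$ must have the shape $a^{n_1}b\,a^{m_1}c\,a^{n_2}b\,a^{m_2}c\ldots$ with the sequence $(n_i)$ satisfying the $T$-condition and $(m_i)$ arbitrary. Now let $w$ have $n_i=1$ for all $i$ while $(m_i)$ is chosen so that infinitely many values recur infinitely often. Then $w\notin\cL(E)$, yet $w\in\cL(E_{[T\mapsto *]})$ and, because \Tcondition{a} inspects \emph{all} maximal $a$-blocks of $w$ indiscriminately, the $(m_i)$-blocks alone make \Tcondition{a} true. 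Your ``bridging observation'' that the exponent sequence of a $T$-occurrence coincides (up to finitely many entries) with the sequence of maximal $(e')_{[T\mapsto *]}$-block sizes is therefore false already in this simple setting: the same regular sub-expression $a$ occurs in two positions, and the global block statistics conflate them.

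The repair is not to refine the block/exponent bijection but to abandon the external conjunction. The standard \sones encoding of $E_{[T\mapsto *]}$ existentially quantifies second-order witnesses (sets marking, for each sub-expression occurrence, where its instances begin and end). The $T$-conditions must be pushed \emph{inside} the scope of those quantifiers and phrased relative to those witnesses, so that \Tcondition{\cdot} talks about the blocks attributed to that particular occurrence of $(e')^T$, not about arbitrary $e'$-factors of $w$. This is consistent with the paper's wording (``enrich'' / ``strengthen'' $\varphi_{E_{[T\mapsto *]}}$, rather than conjoin with it) and also dissolves the compatibility worry you raise at the end: a single global choice of parse witnesses simultaneously fixes the exponent sequences for all $T$-occurrences.
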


As a conclusive remark, notice that \Tcondition{e} uses quantification over
infinite sets, implying that $\varphi_t$ does not belong to the language of
\wsonesU, where second-order quantification is only allowed over finite sets.

\cut{
\hrule

We proceed as follows.

\begin{compactenum}
\item We define formula $\ismaxblockEX$ whose only free variable is $X$, such
  that $\semantics{\ismaxblockEX} = \{ (w,X) \mid w \in (E^*F)^\omega$ and $X$
  identifies a maximal block of repetitions of $E$ in a $\omega$-iteration in
  $w$

  $$\ismaxblockEX \equiv \forall x \in X \exists y > x. isE(x,y)
  \forall x,y,z (x,z \in X \rightarrow y \in X) $$

\end{compactenum}

$isE(x,y) \equiv x=y \wedge a(x)$ if $E = a$

$isE(x,y) \equiv \exists z. x \leq z \leq y \wedge isE_1(x,z) \wedge
isE_2(z+1,y)$ if $E = E_1E_2$

$isE(x,y) \equiv \exists z. x \leq z \leq y \wedge isE_1(x,z) \wedge
isE_2(z+1,y)$ if $E = E_1E_2$
}

%%% Local Variables:
%%% mode: latex
%%% TeX-master: "main"
%%% End:

%\input{strongT}
\section{Conclusions}\label{sec:conclusions}

In this paper, we introduced a new class of extended $\omega$-regular languages ($\omega{T}$-regular languages), 
that captures meaningful languages not belonging to the class of $\omega{BS}$-regular ones. We first gave a characterization of them in terms of $\omega{T}$-regular expressions. Then, we defined the new class of counter-check automata
(CCA), with a decidable emptiness problem, and we proved that they are expressive enough to capture them. Finally, we provided an embedding of \texorpdfstring{$\omega{T}$}{omega T}-regular languages in \sonesU.

%\smallskip

In the exploration of the space of possible extensions of $\omega$-regular languages, we studied
%took into consideration 
also a stronger variant of $(.)^T$, that forces $\omega$-words to feature infinitely many exponents, \emph{all of them} occurring infinitely often (a detailed account can be found in \cite{preprint201701}). To a large extent, the results obtained for $(.)^T$ can be replicated for this stronger variant. In particular, it is possible to introduce a new class of automata, called \emph{counter-queue automata} (\cqa),
%for short), 
that generalize \cca, whose emptiness problem can be proved to be decidable in 2ETIME and which are expressive enough to capture $\omega$-regular languages extended with the stronger variant of $(.)^T$.  As in the case of $\omega{T}$-regular languages, the problem of establishing whether or not  the new languages are expressively complete with respect to \cqa is open.
There are, however, at least two significant differences between$(.)^T$ and its stronger variant. First, $(.)^T$ satisfies the following property of \emph{prefix independence}.  Let $e$ be a $T$-regular expression and let $\vec{u} = (u_1, u_2, \ldots)$ and $\vec{v} = (u_h, u_{h+1}, \ldots)$ be two word sequences such that $\vec{v}$ is the infinite suffix of  $\vec{u}$ starting at position $h$ and $u_i \in \mathcal{L}(e)$ for all $i$. Then, $\vec{u} \in \mathcal{L}(e^T)$ iff $\vec{v} \in \mathcal{L}(e^T)$. Both $(.)^B$ and $(.)^S$ satisfy an analogous property, while this is not the case with the stronger variant of $(.)^T$: if $\vec{u}$ belongs to the language, then $\vec{v}$ belongs to it as well, but not vice versa. The second difference is that there seems to be no way to generalize the embedding of \texorpdfstring{$\omega{T}$}{omega T}-regular languages into \sonesU given in Section \ref{sec:logic} to the stronger variant of $(.)^T$. 
%(as a matter of fact, we strongly believe  such a variant  not to be definable in \sonesU).

%The class of $\omega{T}$-regular languages can be viewed as an intermediate step towards the study of the class of 
%$\omega{BST}$-regular languages, obtained from the combination of  $\omega{T}$- and $\omega{BS}$-regular ones. In 
%particular, we would like to establish whether $\omega{BST}$-regular languages are closed under complementation.

%\smallskip

As for future work, we would like to investigate different combinations of $(.)^B$, $(.)^S$, and (weak and strong) $(.)^T$
We already know that $\omega{BST}$-regular languages are not closed under complementation. Indeed, if they were, they would be expressively complete for \sonesU. However, it is known from~\cite{DBLP:journals/fuin/HummelS12} that \sonesU makes it possible to define languages that are complete for arbitrary levels of the projective hierarchy, while $\omega{BST}$-regular languages live at the first level (analytic sets), and thus they cannot define full \sonesU.
A particularly interesting issue is the one about the intersections of $\omega{B}$-, $\omega{S}$-, and weak/strong $\omega{B}$-regular languages.
In \cite{DBLP:journals/corr/Skrzypczak14}, it has been shown that a language which is both $\omega{B}$- and $\omega{S}$-regular is also
$\omega$-regular. We aim at providing a characterization of languages which are both $\omega{B}$- 
(resp.,  $\omega{S}$-) and $\omega{T}$-regular.
%
%\smallskip
%
We are also interested in (modal) temporal logic counterparts of extended $\omega$-regular languages. To the best of our knowledge, none was provided in the literature. We started to fill such a gap in \cite{lics2013,DBLP:conf/lata/MontanariS13}.

%%% Local Variables:
%%% mode: latex
%%% TeX-master: "main"
%%% End:

\paragraph{Acknowledgements.} We would like to acknowledge prof.~Massimo
Benerecetti for some helpful comments on the relationship between *-, B-, S-,
and T-constructors.

%\nocite{*}
\bibliographystyle{eptcs}
\bibliography{biblio}

%% .. or use the thebibliography environment explicitely

%\newpage
%\appendix
%\input{appendixproofPropositionKleene}
%\input{appendixproofWinning}
%\input{appendixexample}
%\input{appendixproofAutomataEncoding}
%\input{appendixproofLogicEncoding}

%\input{appendix3_1_old_T_definition}
%\input{appendix3_2_old_automata}
%\input{appendix3_3_old_nonempty_mpda}

\end{document}